\newcommand{\ket}[1]{|#1\rangle}
\newcommand{\bra}[1]{\langle #1|}
\newcommand{\state}[1]{\ket{#1}\!\bra{#1}}
\newcommand{\braket}[2]{\langle #1 | #2\rangle}
\newcommand{\matel}[3]{\langle #1|#2|#3\rangle}
\newcommand{\Tr}{\text{Tr}}
\newcommand{\tr}{\text{Tr}}
\newcommand{\Id}{\mathbb{I}}
\newcommand{\cE}{\mathcal{E}}
\newcommand{\cM}{\mathcal{M}}
\newcommand{\cB}{\mathcal{B}}
\newcommand{\cS}{\mathcal{S}}
\newcommand{\cH}{\mathcal{H}}
\newcommand{\cQ}{\mathcal{Q}}
\newcommand{\cR}{\mathcal{R}}
\newcommand{\mx}{\text{max}}
\newcommand{\acc}{\text{acc}}
\newtheorem{lemma}{Lemma}
\newtheorem{theorem}{Theorem}
\newtheorem{corollary}{Corollary}
\newtheorem{definition}{Definition}
\begin{document}
\title{Quantifying incompatibility beyond entropic uncertainty}
\author{Srijita Kundu}
\affiliation{Chennai Mathematical Institute, Siruseri, Chennai   - 603103, India.}
\author{Kaustubh Wagh}
\affiliation{Department of Physics, Indian Institute of Technology Madras, Chennai - 600036, India. }
\author{Prabha Mandayam}
\affiliation{Department of Physics, Indian Institute of Technology Madras, Chennai - 600036, India. }

\date{\today}

\begin{abstract}
We study two operational approaches to quantifying incompatibility that depart significantly from the well known entropic uncertainty relation (EUR) formalism. Both approaches result in incompatibility measures that yield non-zero values even when the pair of incompatible observables commute over a subspace, unlike EURs which give a zero lower bound in such cases. Here, we explicitly show how these measures go beyond EURs in quantifying incompatibility: For {\it any} set of quantum observables, we show that both incompatibility measures are bounded from below by the corresponding EURs for the Tsallis ($T_{2}$) entropy. We explicitly evaluate the incompatibility of a pair of qubit observables in both operational scenarios. We also obtain an efficiently computable lower bound for the mutually incompatibility of a general set of observables. 
\end{abstract}

\maketitle

Characterizing the mutual incompatibility of a set of quantum observables is an important question both from a quantum cryptographic as well as a foundational point of view. While the Heisenberg uncertainty principle~\cite{heisenberg:ur, robertson:uncertainty} provided the first quantitative statement on the incompatibility of a pair of canonically conjugate observables, later formulations in terms of entropic quantities characterize the incompatibility of any set of observables via entropic uncertainty relations (EURs) (see~\cite{ww:urSurvey} for a recent review). 

For a given set of observables, an EUR seeks to lower bound the sum of the entropies associated with the probability distributions arising from measurements of these observables, on distinct yet identically prepared copies of a quantum system. EURs play an important role in the security of quantum cryptographic tasks~\cite{qkd:ur, renes:ur, prl:noisy}, and are often thought to quantify the incompatibility of a set of quantum observables. However, EURs give a trivial lower bound whenever the observables in question share a single common eigenvector, although there are variants of the standard EUR formalism that circumvent this pitfall in certain specific cases~\cite{fine-grained, univ_UR, EUR_nonlocality13}. Uncertainty lower bounds cannot therefore be considered a {\it measure} of incompatibility in general.

This has motivated the emergence of alternate approaches to quantifying incompatibility. 
The measure $\cQ$ defined in~\cite{bandyoPM_PRA13} is based on the fact that the eigenstates associated with a set of incompatible observables are not perfectly distinguishable. On the other hand, the class of measures $\{\cQ_{\alpha}, \alpha = 1, F, \infty\}$ defined in~\cite{PM_MDS14} quantify the mutual incompatibility of a pair of observables by estimating the disturbance due to a measurement of one observable on the statistics of the outcomes of the other. Both these measures have the desirable property that they are strictly zero if and only if the observables in the set all commute. While bounds on these measures are known, exact expressions have been obtained only for a set of mutually unbiased observables~\cite{bandyoPM_PRA13, PM_MDS14}. Evaluating these measures exactly for any set of observables is in general a hard optimization problem to which an efficient solution is not yet known.

Here, we clarify the exact relation between these incompatibility measures and the standard EUR formalism. For a general set of observables we prove that the measure $\cQ$ is bounded from below by the lower bound on the corresponding Tsallis ($T_{2}$) entropic uncertainty relation, and, the measure $\cQ_{F}$ is bounded from below by the the $T_{2}$ entropic lower bound in a successive measurement scenario. We also evaluate the incompatibility of a pair of observables that commute on a subspace, thus providing an explicit example of a class of observables for which these measures go beyond EURs.  Finally, we obtain efficiently computable lower bounds for both the incompatibility measures for {\it any} set of observables.


The rest of the paper is organized as follows. We present a brief review of the incompatibility measures $\cQ$ and $\{\cQ_{\alpha}\}$ in Sec.~\ref{sec:Incompat}.  We demonstrate the relations between these measures and Tsallis EUR lower bounds in Sec.~\ref{sec:Q_EUR}. We evaluate the incompatibility of a pair of qubit observables in Sec.~\ref{sec:Q_qubit}, and consider the case of observables that commute on a subspace in Sec.~\ref{sec:commute}. Finally, in Sec.~\ref{sec:Q_lbound} we obtain a lower bound on the incompatibility of a general set of observables, which can be efficiently computed via convex optimization.

\section{Operational measures of incompatibility} \label{sec:Incompat}

We begin with a brief review of two operational approaches to quantifying incompatibility proposed recently~\cite{bandyoPM_PRA13, PM_MDS14}. Throughout this paper we work within the framework of standard quantum theory, and restrict our attention to observables associated with self-adjoint operators with discrete spectra, on a $d$-dimensional Hilbert space $\cH_{d}$. 

\subsection{Distinguishability-based measure}\label{sec:F_acc}

An important physical manifestation of the mutual incompatibility of a set of observables is the fact that the eigenstates of such a set are not all mutually orthogonal and therefore cannot be distinguished perfectly. This motivates the incompatibility measure $\cQ$ defined in~\cite{bandyoPM_PRA13}, which is based on quantifying the extent to which the eigenstates corresponding to a given set of observables are not distinguishable.  

Consider a set of $N$ non-degenerate observables $\{A^{(1)}, A^{(2)}, \ldots, A^{(N)}\}$ on a $d$-dimensional Hilbert space $\cH_{d}$, and let $\vert a^{(i)}_{j}\rangle\langle\ a^{(i)}_{j}\vert$ denote the $j^{\rm th}$ eigenstate of the $i^{\rm th}$ observable $A^{(i)}$. If the observables $A^{(1)}, A^{(2)}, \ldots, A^{(N)}$ do not all commute, they do not have a complete set of common eigenstates, and hence at least some of their eigenstates must be non-orthogonal. Then, the incompatibility measure $\cQ(A^{(1)}, A^{(2)}, \ldots, A^{(N)})$ defined in~\cite{bandyoPM_PRA13} quantifies the extent to which the states $\{\vert a^{(i)}_{j}\rangle\langle a_{j}^{(i)}\vert\}$, $i\in \left[N\right]$, $j\in \left[d\right]$ are not distinguishable. The more incompatible the observables $\{A^{(i)}\}$ are, the lesser is the fidelity with which their eigenstates can be distinguished. The incompatibility $\cQ$ of a set of observables is thus defined as the {\it complement} of the best possible fidelity obtained in a quantum state estimation process~\cite{navascues} for the uniform ensemble of their eigenstates.

Specifically, let $\cS \equiv \{|a^{(i)}_{j}\rangle\langle a^{(i)}_{j}|\}$, $i\in\left[N\right]$, $j\in\left[d\right]$ denote the ensemble comprising eigenstates of the observables $\{A^{(i)}\}$, wherein the states are all picked with equal probability $\frac{1}{Nd}$. The quantum state estimation problem for the ensemble $\cS$ seeks to maximize the {\it average fidelity} function
\begin{equation}
F_{\cS}(\cM, \cR) = \frac{1}{Nd}\sum_{k,i,j} \langle a_{j}^{(i)}| M_{k}|a^{(i)}_{j}\rangle \langle a_{j}^{(i)}| \sigma_{k}|a^{(i)}_{j}\rangle , \nonumber
\end{equation}
over all possible positive operator valued measures (POVM) $\cM$ (with elements $\{M_{k}\}$, $0\leq M_{k} \leq \Id$, $\sum_{k} M_{k} = \Id$), and all state reconstruction maps $\cR: k \rightarrow \sigma_{k}$ (in which the state $\sigma_{k}$ is the estimate corresponding to measurement outcome $k$). The maximum fidelity that can obtained for the ensemble $\cS$ in a state estimation process is given by,
\begin{eqnarray}
&& F^{\rm max}_{\cS} \nonumber \\
& = & \frac{1}{Nd}\sup_{\cM, \cR} \sum_{k,i,j}\langle a_{j}^{(i)}| M_{k}| a^{(i)}_{j}\rangle \langle a_{j}^{(i)}| \sigma_{k}| a^{(i)}_{j}\rangle. \label{eq:maxFid}
\end{eqnarray}
The mutual incompatibility $\cQ$ of the observables $\{A^{(1)}, \ldots, A^{(N)}\}$ is then defined as~\cite{bandyoPM_PRA13},
\begin{equation}
\cQ(A^{(1)}, \ldots, A^{(N)}) = 1 - F^{\rm max}_{\cS}. \label{eq:Q}
\end{equation}
It is easy to see that $0 \leq \cQ(A^{(1)}, \ldots, A^{(N)}) \leq 1$ for any set of observables $\{A^{(1)}, A^{(2)}, \ldots, A^{(N)}\}$, with the lower bound being attained if and only if the observables $\{A^{(1)}, \ldots, A^{(N)}\}$ all commute. 

The measure $\cQ$ is also of direct relevance in quantum cryptography, specifically, in the context of quantum key distribution (QKD) protocols of the {\it prepare and measure} type~\cite{qkd_rmp}. As noted in~\cite{bandyoPM_PRA13}, the measure $\cQ(A^{(1)}, \ldots, A^{(N)})$ is simply the complement of the accessible fidelity~\cite{Fuchs-Sasaki, fuchs04} of the ensemble $\cS$, that is, the best possible fidelity an  eavesdropper can obtain by employing an ``intercept-resend strategy'', in a QKD protocol where Alice transmits pure states $|a^{(i)}_{j}\rangle\langle a^{(i)}_{j}|$ drawn uniformly at random from the ensemble $\cS$. 

In fact there exists a more direct, quantitative relation between the incompatibility measure $\cQ(A^{(1)}, A^{(2)}, \ldots, A^{(N)})$ and the error rate caused by the presence of an eavesdropper in a QKD protocol whose signal ensemble comprises the eigenstates of the observables $\{A^{(1)}, A^{(2)}, \ldots, A^{(N)}\}$. We make this precise in Appendix~\ref{sec:Q_QKD}.


\subsection{Distance-based incompatibility measures} \label{sec:incompat_dist}

Another important consequence of the mutual incompatibility of a pair of observables $A,B$ is the change caused to the measurement statistics of one (say $B$) due to an intervening measurement of another (say $A$) on the {\it same} state. Quantifying this change in measurement statistics, also leads to a characterization of the incompatibility of the pair of observables $A$ and $B$, as shown in~\cite{PM_MDS14}.

For a pair of observables $A, B$ with spectral decompositions $A= \sum_{i=1}^{d}a_{i}P^{A}_{i}$ and $B=\sum_{j=1}^{d} b_{j}P^{B}_{j}$, consider the following two probability distributions: one resulting from a measurement of $A$ followed by a measurement of $B$ on a given state, and the other obtained from a measurement $B$ alone on the same state. Let ${\rm Pr}^{B}_{\rho} \sim \{p_{\rho}^{B}(j), \; j = 1,\ldots, d\}$ denote the probability distribution over the outcomes of a measurement of observable $B$ in state $\rho \in \cB(\cH_{d})$. Let ${\rm Pr}^{A\rightarrow B}_{\rho} \sim \{q^{A\rightarrow B}_{\rho}(j), \; j=1,\ldots, d\}$ denote the probability distribution over the outcomes of a $B$ measurement following a measurement of $A$ on the same state $\rho$. These probabilities are simply given by
\begin{eqnarray}
{\rm Pr}^{B}_{\rho}: p^{B}_{\rho}(j) &=& \tr\left[P^{B}_{j}\rho\right], \; j = 1,\ldots, d.   \label{eq:prob_defn}\\
 {\rm Pr}^{A\rightarrow B}_{\rho} : q^{A\rightarrow B}_{\rho}(j) &=& \tr\left[P^{B}_{j}\left(\sum_{i=1}^{d}P^{A}_{i}\rho P^{A}_{i}\right)\right]. \nonumber
\end{eqnarray}
If $A$ and $B$ commute, their corresponding eigen-projectors commute, and the two probability distributions defined in Eq.~\eqref{eq:prob_defn} above are identical for all states $\rho$. For a general pair of observables, the distance between the probability distributions ${\rm Pr}^{A\rightarrow B}_{\rho}$ and  ${\rm Pr}^{B}_{\rho}$ can thus be regarded as a measure of how much a measurement of $A$ disturbs the statistics of the outcomes of a subsequent measurement of $B$ on the same state $\rho$. It was shown that maximizing the distance between these two distributions over all states yields a valid measure of the incompatibility of observables $A$ and $B$, which is zero if and only if $A$ and $B$ commute and is strictly greater than zero otherwise~\cite{PM_MDS14}.


Corresponding to the standard classical distance measures between probability distributions~\cite{Cha, Mathai_Rathie}, the following three measures of incompatibility of observable $A$ with $B$ were defined in~\cite{PM_MDS14}.
\begin{itemize}
 \item[(i)] $L_{1}$-distance based incompatibility measure:
\begin{equation}
\cQ_{1}(A\rightarrow B) = \sup_{\rho} D_{1}\left({\rm Pr}^{A\rightarrow B}_{\rho},{\rm Pr}^{B}_{\rho} \right), \nonumber
\end{equation}
where, $D_{1}(P,Q) \equiv \frac{1}{2}\sum_{i}\vert p_{i} - q_{i}\vert$ is the $L_{1}$-distance.
\item[(ii)] Fidelity-based incompatibility measure:
\begin{equation}
 \cQ_{F}(A\rightarrow B) = \sup_{\rho}\left[1  - F^{2}({\rm Pr}^{A\rightarrow B}_{\rho},{\rm Pr}^{B}_{\rho})\right], \nonumber
\end{equation}
where $F(P,Q) \equiv \sum_{i}\sqrt{p_{i}}\sqrt{q_{i}}$ is the fidelity, also known as the Bhattacharyya distance. 
\item[(iii)] $L_{\infty}$-distance based incompatibility measure:
\begin{equation}
 \cQ_{\infty}(A\rightarrow B) = \sup_{\rho} D_{\infty}({\rm Pr}^{A\rightarrow B}_{\rho},{\rm Pr}^{B}_{\rho}), \nonumber
\end{equation}
where $D_{\infty}(P,Q) \equiv \max_{i}\vert p_{i} -  q_{i}\vert$ is the Chebyshev or $L_{\infty}$-distance. 
\end{itemize}

Note that all three incompatibility measures defined above satisfy,
\begin{equation}
 0 \leq \cQ_{\alpha}(A\rightarrow B) \leq 1, \;\;\; \alpha, \in \{1, F, \infty\}
\end{equation}
where the lower bound is attained if and only if the observables $A$ and $B$ commute~\cite{PM_MDS14}. Furthermore, $\cQ_{\alpha}(A \rightarrow B) \neq \cQ_{\alpha}(B \rightarrow A)$ in general. The incompatibility $\cQ_{\alpha}(A,B)$ of the pair of observables $A,B$ is therefore defined as the average of the incompatibilities $\cQ_{\alpha}(A\rightarrow B)$ and $\cQ_{\alpha}(B\rightarrow A)$, thus ensuring that $\cQ_{\alpha}(A,B)$ is large when {\it both} $\cQ_{\alpha}(A\rightarrow B)$ and $\cQ_{\alpha}(B\rightarrow A)$ are large and vice-versa. Finally, the incompatibility of a set of $N$ observables $\{A_{1}, A_{2}, \ldots, A_{N}\}$ is defined in terms of the pairwise incompatibilities $\{\cQ_{\alpha}(A_{i}\rightarrow A_{j})\}$, as, 
\begin{equation}
\cQ_{\alpha}(A_{1}, A_{2}, \dots, A_{N}) \equiv \frac{1}{N^{2}}\sum_{i,j}\cQ_{\alpha}(A_{i}\rightarrow A_{j}), \label{eq:sym_measureN}
\end{equation}
where $\cQ_{\alpha}(A_{i}\rightarrow A_{i}) = 0$. In the rest of the paper, we restrict our attention to the fidelity-based measure $\cQ_{F}$.

\section{Incompatibility and Entropic Uncertainty}\label{sec:Q_EUR}

While the operational approaches presented above depart significantly from the standard entropic uncertainty based approach to quantifying incompatibility, it is useful to understand how these two approaches are related. 

We begin with a brief review of the EUR formalism. We restrict our attention to observables associated with a non-degenerate spectra. For a set of $N$ observables $\{A^{(1)}, A^{(2)}, \ldots, A^{(N)}\}$, let $|a^{(i)}_{j}\rangle\langle a^{(i)}_{j}|$ denote the $j^{\rm th}$ eigenstate of the observable $A^{(i)}$. Then, an entropic uncertainty relation (EUR) seeks to lower bound the average of the entropies associated with a measurement of each (on {\it distinct} yet {\it identically prepared states}), as follows:
\begin{equation}
\inf_{\ket{\phi}} \frac{1}{N} \sum_{i=1}^N S(A^{(i)}; \ket{\phi}) \ge c(A^{(1)},\ldots, A^{(N)}), \label{eq:eur}
\end{equation}
where $S(A^{(i)}; \ket{\phi})$ denotes some entropy function of the probability distribution $p_{|\phi\rangle}^{A^{(i)}}(j) = |\braket{a^{(i)}_{j}}{\phi}|^{2}$ arising from a measurement of $A^{(i)}$ on state $|\phi\rangle$. The commonly used entropy functions belong to the R\'enyi class of entropies~\cite{renyi:entropy}  or the Tsallis class of entropies~\cite{tsallis:entropy}. For any $\alpha \geq 0$, the R\'enyi entropy $H_{\alpha}(.)$ and the Tsallis entropy $T_{\alpha}(.)$ of order $\alpha$ are defined as follows:
\begin{eqnarray}
H_{\alpha}(\{ p(i) \}) &:=&  \frac{1}{1-\alpha}\log\left(\sum_{i}(p(i))^{\alpha}\right), \nonumber \\
T_{\alpha}(\{p(i)\}) &:=& \frac{1}{1-\alpha}\left(\sum_{i}(p(i))^{\alpha} -1\right) , \nonumber
\end{eqnarray}
Throughout this paper we use $\log$ to denote the logarithm taken to the base $2$. In the limiting case of $\alpha\rightarrow 1$, $H_{1}(.) = T_{1}(.)$ is the well known Shannon entropy. We note that the R\'enyi entropies $H_{\alpha}$ are concave for $ 0 < \alpha \leq \alpha* = 1 + \frac{2}{\log (d-1)} $, and the Tsallis entropies $T_{\alpha}$ are concave for $\alpha >0$.

For any set of observables $\{A^{(1)}, A^{(2)}, \ldots, A^{(N)}\}$, the entropic lower bound in Eq.~\eqref{eq:eur} corresponding to the R\'enyi class of entropies satisfies
\[ 0 \leq c_{\alpha}(A^{(1)},\ldots, A^{(N)}) \leq \left(\frac{L-1}{L}\right)\log d  \; (\forall \alpha \geq 0), \]
whereas the Tsallis entropic lower bounds satisfy,
\[ 0 \leq t_{\alpha}(A^{(1)},\ldots, A^{(N)}) \leq \left(\frac{L-1}{L}\right) \frac{d ^{1-\alpha} -1}{1-\alpha} \; (\forall \alpha \geq 0). \]
In both cases, the trivial (zero) value is attained for any common eigenstate of the observables $\{A^{(1)}, \ldots, A^{(N)}\}$. We refer to~\cite{ww:urSurvey} for a recent survey of the known entropic uncertainty relations for different sets of observables. 

EURs have also been formulated for the case where a pair of observables $A$ and $B$ are measured {\it sequentially} on a system in state $|\psi\rangle$.  Here, the uncertainty in the outcome of the first measurement, say observable $A$, is given as before by $S(A;\ket{\psi})$, but the uncertainty in the outcome of a subsequent measurement of $B$ is to be calculated with respect to the post-measurement state 
\[ \mathcal{E}^{A}(|\psi\rangle) = \sum_{i} P^{A}_{i}|\psi\rangle\langle\psi| P^{A}_{i}, \]
and is therefore denoted as $S (B;\mathcal{E}^{A}(|\psi\rangle))$. Entropic uncertainty relations in the successive measurement scenario are therefore of the form~\cite{MDS01, MDS03_successiveEUR},
\begin{eqnarray}
\frac{1}{2}\inf_{\ket{\psi}} \left[ S(A; \ket{\psi}) + S (B; \cE^{A}(\ket{\psi}))\right] &=& c (A\rightarrow B), 
\nonumber  \\
\frac{1}{2}\inf_{\ket{\psi}} \left[ S(B; \ket{\psi}) + S (A; \cE^{B}(\ket{\psi}))\right] &=& c (B\rightarrow A). 
\label{eq:succEUR}
\end{eqnarray}
We note that the lower bound may change depending on which of the observables $A$ or $B$ is measured first. Such EURs have been studied for the Shannon entropy ($H_{1}$) for a general pair of observables~\cite{succEUR_zhang15} and the Tsallis class of entropies for a pair of qubit observables~\cite{Rastegin15}. In fact, when the entropy function is concave, there exists an explicit closed form expression for the lower bounds in Eq.~\eqref{eq:succEUR}, as we note below.
\begin{lemma}\label{lem:succEUR}
Consider a pair of observables $A,B$ with non-degenerate spectra and eigenvectors $\{|a_{i}\rangle\}$ and $\{|b_{j}\rangle\}$ respectively. In the case of a successive measurement of $A$ followed by a measurement of $B$ on the same system, the following optimal entropic uncertainty relations hold.
\begin{eqnarray}
&& \frac{1}{2}\inf_{\rho} \left[ H_{\alpha}(A;\rho) + H_{\alpha}(B;\cE^{A}(\rho))\right] \nonumber \\
&=& \inf_{i} \frac{1}{1-\alpha}\log\left(\vert\sum_{j=1}\vert\langle a_{i}|b_{j}\rangle\vert^{2}\vert^{\alpha}\right), \; (0\leq \alpha \leq \alpha^{*} ), \nonumber \\
&& \frac{1}{2}\inf_{\rho} \left[T_{\alpha}(A;\rho) + T_{\alpha}(B; \cE^{A}(\rho))\right] \nonumber \\
&=& \inf_{i} \frac{1 - \sum_{j}\vert\langle a_{i}|b_{j}\rangle\vert^{2\alpha} }{\alpha -1}.  \; (\alpha > 0)
\end{eqnarray}
\end{lemma}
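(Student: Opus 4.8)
The plan is to establish the successive-measurement EUR by directly computing the infimum of the relevant entropy sum. The key observation is that because $\cE^{A}(\rho) = \sum_i P_i^A \rho P_i^A$ destroys all coherence between the eigenspaces of $A$, the two terms in the sum decouple in a useful way. **First I would** note that the probability distribution for the first measurement of $A$ on $\rho$ is $p^A_\rho(i) = \langle a_i|\rho|a_i\rangle$, while the distribution for the subsequent $B$ measurement is computed on $\cE^A(\rho)$, giving $q(j) = \sum_i \langle a_i|\rho|a_i\rangle\,|\langle a_i|b_j\rangle|^2$. Writing $p_i \equiv \langle a_i|\rho|a_i\rangle$, the second distribution is the \emph{classical} image of the first under the stochastic matrix $S_{ji} = |\langle a_i|b_j\rangle|^2$.

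**Next I would** exploit concavity to reduce the optimization to pure states, or more precisely to the extreme points of the achievable $(p_i)$ simplex. Since $\cE^A(\rho)$ depends on $\rho$ only through the diagonal probabilities $p_i$, both entropy terms are functions of the single probability vector $(p_1,\dots,p_d)$: the first term is $S_\alpha(\{p_i\})$ and the second is $S_\alpha(\{\sum_i p_i S_{ji}\})$. The whole expression is thus $f(p) = \tfrac12[S_\alpha(p) + S_\alpha(Sp)]$, which for concave $S_\alpha$ (i.e.\ Tsallis for all $\alpha>0$, R\'enyi for $0\le\alpha\le\alpha^*$) is a concave function of $p$ on the probability simplex. **The key step** is then that the infimum of a concave function over a simplex is attained at a vertex, i.e.\ at $p = e_i$ for some $i$, corresponding to $\rho = |a_i\rangle\langle a_i|$. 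At such a point the first entropy $S_\alpha(e_i)$ vanishes, and the second becomes $S_\alpha$ of the distribution $\{|\langle a_i|b_j\rangle|^2\}_j$. Substituting into the Tsallis and R\'enyi definitions immediately yields
\[
\inf_i \tfrac12 S_\alpha(\{|\langle a_i|b_j\rangle|^2\}_j),
\]
which on expanding gives exactly the claimed closed forms $\inf_i \tfrac{1}{1-\alpha}\log(\sum_j |\langle a_i|b_j\rangle|^{2\alpha})$ and $\inf_i \tfrac{1-\sum_j |\langle a_i|b_j\rangle|^{2\alpha}}{\alpha-1}$, after noting that $\sum_j |\langle a_i|b_j\rangle|^2 = 1$ makes the R\'enyi prefactor's argument a single power. (The first line of the statement carries an apparent typo, $\vert\sum_j |\langle a_i|b_j\rangle|^2\vert^\alpha$, which should read $\sum_j |\langle a_i|b_j\rangle|^{2\alpha}$; I would state it in the corrected form.)

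**The main obstacle** I anticipate is justifying the restriction to vertices rigorously for the R\'enyi case, where concavity only holds up to $\alpha^*$ and the composition $S_\alpha \circ S$ must be checked to remain concave under the linear map $p \mapsto Sp$ — this is automatic since precomposition of a concave function with an affine map preserves concavity, so the constraint is genuinely just $\alpha \le \alpha^*$ from the R\'enyi entropy itself. A secondary point requiring care is confirming that ranging over all density operators $\rho$ (rather than pure states) does not enlarge the achievable set of diagonal vectors $(p_i)$ beyond the full simplex; it does not, since any probability vector $p$ is realized by the diagonal $\rho = \sum_i p_i |a_i\rangle\langle a_i|$, so the infimum over mixed states equals the infimum over this simplex and the vertex argument applies directly.
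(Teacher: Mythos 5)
Your proof is correct and takes essentially the same route as the paper's: both arguments reduce the objective to a concave function of the $A$-diagonal probabilities (the paper via the invariance $S_{\alpha}(A;\rho)=S_{\alpha}(A;\cE^{A}(\rho))$, you via the stochastic-matrix parametrization $q=Sp$) and then locate the minimum at the extreme points $\rho=\state{a_{i}}$, where the first entropy vanishes and the second yields the claimed closed forms. Your write-up is simply more explicit about details the paper leaves implicit, namely that affine precomposition preserves concavity and that the achievable diagonal vectors fill the whole simplex.
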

\begin{proof}
The concavity of the R\'enyi and Tsallis entropies for the ranges $0< \alpha \leq \alpha^{*}$ and $\alpha>0$ respectively implies that lower bound will be attained for pure states. Furthermore, as noted in~\cite{MDS03_successiveEUR}, $T_{\alpha}(A;\rho) = T_{\alpha}(A; \cE^{A}(\rho))$ and $H_{\alpha}(A;\rho) = H_{\alpha}(A;\cE^{A}(\rho))$, for all states $\rho$. Hence, the optimization on the LHS is to be done over pure states of the form $\cE^{A}(\rho)$, for some $\rho$. In other words, we only need to minimize over the eigenstates $\{|a_{i}\rangle\}$ of $A$. Therefore,
\begin{eqnarray}
&& \inf_{\rho} \frac{1}{2}\left[T_{\alpha}(A;\rho) + T_{\alpha}(B; \cE^{A}(\rho)) \right] \nonumber \\
&=& \inf_{i} T_{\alpha}(B; |a_{i}\rangle) = \inf_{i}\frac{1 - \sum_{j=1}^{d}\vert\langle a_{i}|b_{j}\rangle\vert^{2\alpha}}{\alpha -1} . \nonumber
\end{eqnarray}
Similarly, 
\begin{eqnarray}
&& \inf_{\rho} \frac{1}{2}\left[H_{\alpha}(A;\rho) + H_{\alpha}(B; \cE^{A}(\rho)) \right] \nonumber \\
&=&  \inf_{i} H_{\alpha}(B; |a_{i}\rangle) = \inf_{i}\frac{\log\vert\sum_{j=1}^{d}\vert\langle a_{i}|b_{j}\rangle\vert^{2}\vert^{\alpha}}{1-\alpha}. \nonumber
\end{eqnarray}
\end{proof} 

In the rest of the paper we restrict our attention to the Tsallis and R\'enyi entropies of order two, denoted as $T_{2}$ and $H_{2}$ respectively. The Tsallis entropy $T_2$ of the distribution $\{p_{|\phi\rangle}^{A}(i)\}$, given by $T_{2}(A;|\phi\rangle) = 1 - \sum_{i}(p_{|\phi\rangle}^{A}(i))^{2}$, is also referred to as the {\it linear entropy}. It is a measure of the {\it purity} of the density operator $\rho = \sum_{i} p_{|\phi\rangle}^{A}(i)\state{a_{i}}$ corresponding to the post-measurement state resulting from a measurement of $A$ on $|\phi\rangle$. The R\'enyi entropy $H_2$ (often called the {\it collision entropy}) of the distribution $\{p_{|\phi\rangle}^{A}(i)\}$ is given by $H_{2}(A; \ket{\phi}) = - \log\sum_{j=1}^d |\braket{\psi_{j}}{\phi}|^4$. 

In the following sections, we show that the incompatibility measures discussed above are indeed bounded below by the EUR lower bounds formulated in terms of the $T_{2}$ entropy.

\subsection{Entropic lower bounds on $\cQ$}\label{sec:accFid_EUR}

Evaluating the measure $\cQ$ requires a two-step optimization in general --  one over POVMs and the other over all possible state reconstruction maps. This problem is made tractable by first maximizing the average fidelity over all possible state reconstruction strategies. Following~\cite{Fuchs-Sasaki, fuchs04} the maximum fidelity function can thus be simplified as,
\begin{equation}
 F^{\rm max}_{\cS} = \frac{1}{d} \sup_{\cM \equiv \{M_{k}\}} \left[\sum_k \lambda_\mx \left(\mathbb{A}(M_{k})\right)\right], \label{eq:accFid}
\end{equation}
where, $\lambda_{\rm max}(.)$ is the maximum eigenvalue function, and, 
\[ \mathbb{A}(.) \equiv \frac{1}{N}\sum_{i,j} |a^{(i)}_{j}\rangle\langle a^{(i)}_{j}| (.)   | a^{(i)}_{j}\rangle\langle a^{(i)}_{j}| , \]
is the completely positive trace-preserving (CPTP) map whose Kraus operators are simply the states in the ensemble $\cS$. Furthermore, since the maximum is always attained for a POVM with rank-one elements, we can set $M_{k} = m_{k}|\chi_{k}\rangle\langle \chi_{k}|$ without loss of generality. The maximum fidelity for the ensemble $\cS$ is thus given by,
\begin{eqnarray}
F^{\rm max}_{\cS} &=& \frac{1}{d} \sup_{\cM \equiv \{m_{k}|\chi_{k}\rangle\}} \left[\sum_k m_{k} \lambda_\mx \left(\mathbb{A}(|\chi_{k}\rangle\langle\chi_{k}|)\right)\right] \nonumber \\
&=& \frac{1}{d} \sup_{\{m_{k},|\chi_{k}\rangle\}}\sum_{k}m_{k}F^{\rm avg}_{\cS}(|\chi_{k}\rangle\langle\chi_{k}|), \nonumber
\end{eqnarray}
where,
\begin{equation}
F^{\rm avg}_{\cS}(|\chi_{k}\rangle\langle\chi_{k}|) = \lambda_\mx \left(\mathbb{A}(|\chi_{k}\rangle\langle\chi_{k}|)\right) \label{eq:avgFid}
\end{equation}
is the {\it average fidelity} achieved by the POVM element $|\chi_{k}\rangle\langle\chi_{k}|$, for the ensemble $\cS$. 

A related notion which will be useful in stating our result, is that of a POVM that is {\it constant} with respect to a given ensemble of states $\cS$. 
\begin{definition}
A POVM $\cM \equiv \{m_{k}, |\chi_{k}\rangle\langle\chi_{k}|, k=1,\ldots, N\}$ with rank-one elements $|\chi_{k}\rangle\langle\chi_{k}|$ 
is said to be {\bf constant} with respect to a given ensemble $\cS \equiv \{|a^{(i)}_{j}\rangle\langle a^{(i)}_{j}|\}_{i,j}$ if the individual POVM elements all yield the same average fidelity $F^{\rm avg}_{\cS}(|\chi_{k}\rangle\langle\chi_{k}|)$ defined in Eq.~\eqref{eq:avgFid}. 
\end{definition} 
In other words, for a constant POVM, the average fidelity function $F^{\rm avg}_{\cS}(|\chi_{k}\rangle\langle\chi_{k}|)$ defined in Eq.~\eqref{eq:avgFid} is a constant for a given $\cS$, independent of the individual POVM elements $\{|\chi_{k}\rangle\langle \chi_{k}|\}$. 

We now prove the following relation between $\cQ$ and the $T_{2}$ entropic lower bound.
\begin{theorem}\label{thm:Q_EUR}
For a set of $N$ non-degenerate observables $\{A^{(1)}, A^{(2)}, \ldots, A^{(N)}\}$ with the associated ensemble of eigenstates $\cS \equiv \{\frac{1}{Nd},|\psi^{(i)}_{j}\rangle\langle\psi^{(i)}_{j}|\}$,  the incompatibility $\cQ(A^{(1)}, A^{(2)}, \ldots, A^{(N)})$ is bounded from below by the minimum average Tsallis $T_{2}$ entropy, that is,
\begin{equation}
\cQ(A^{(1)}, A^{(2)}, \ldots, A^{(N)}) \geq t_{2} (A^{(1)}, A^{(2)},\ldots, A^{(N)}), \label{eq:Q_EUR}
\end{equation}
where,
\[ t_{2}(A^{(1)}, A^{(2)}, \ldots, A^{(i)}) \equiv \min_{|\phi\rangle}\frac{1}{N}\sum_{i}T_{2}(A^{(i)};|\phi\rangle).
\]
Equality holds iff the optimal POVM achieving the accessible fidelity is constant with respect to the ensemble $\cS$.
\end{theorem}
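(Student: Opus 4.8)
The plan is to reduce both sides of \eqref{eq:Q_EUR} to statements about the single-vector functional $g(\ket{\phi}) \equiv \frac{1}{N}\sum_{i,j}|\braket{a^{(i)}_{j}}{\phi}|^{4}$ and then bridge them with a Cauchy--Schwarz inequality. First I would rewrite the Tsallis bound: since $T_{2}(A^{(i)};\ket{\phi}) = 1 - \sum_{j}|\braket{a^{(i)}_{j}}{\phi}|^{4}$, we get $t_{2}(A^{(1)},\ldots,A^{(N)}) = 1 - \max_{\ket{\phi}} g(\ket{\phi})$, so proving \eqref{eq:Q_EUR} is equivalent to establishing the upper bound $F^{\rm max}_{\cS} \le \max_{\ket{\phi}} g(\ket{\phi})$ on the accessible fidelity.

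Next I would work from the simplified expression \eqref{eq:accFid}. Writing the CPTP map action on a rank-one element as the positive operator $\mathbb{A}(\state{\chi}) = \frac{1}{N}\sum_{i,j}|\braket{a^{(i)}_{j}}{\chi}|^{2}\,\state{a^{(i)}_{j}}$, its largest eigenvalue is the Rayleigh-quotient maximum $F^{\rm avg}_{\cS}(\state{\chi}) = \lambda_\mx(\mathbb{A}(\state{\chi})) = \max_{\ket{\phi}} \frac{1}{N}\sum_{i,j}|\braket{a^{(i)}_{j}}{\chi}|^{2}|\braket{a^{(i)}_{j}}{\phi}|^{2}$. The key step is to bound the inner sum, for each fixed $\ket{\chi}$ and $\ket{\phi}$, by Cauchy--Schwarz applied to the composite index $(i,j)$ with $u_{ij}=|\braket{a^{(i)}_{j}}{\chi}|^{2}$ and $v_{ij}=|\braket{a^{(i)}_{j}}{\phi}|^{2}$. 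Since $\sum_{i,j}u_{ij}^{2}=N\,g(\ket{\chi})$ and $\sum_{i,j}v_{ij}^{2}=N\,g(\ket{\phi})$, this yields $F^{\rm avg}_{\cS}(\state{\chi}) \le \sqrt{g(\ket{\chi})}\,\sqrt{g(\ket{\phi})} \le \max_{\ket{\psi}} g(\ket{\psi}) = 1 - t_{2}$ for every admissible POVM element $\ket{\chi}$.

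Finally I would assemble the bound over the whole POVM. As the optimal estimation POVM may be taken rank-one, $F^{\rm max}_{\cS} = \frac{1}{d}\sup\sum_{k} m_{k} F^{\rm avg}_{\cS}(\state{\chi_{k}})$, and taking the trace of the completeness relation $\sum_k m_k \state{\chi_k} = \Id$ gives $\sum_k m_k = d$. Thus $F^{\rm max}_{\cS}$ is a weighted average, with weights $m_k/d$ summing to one, of quantities each bounded above by $1 - t_{2}$; hence $F^{\rm max}_{\cS} \le 1 - t_{2}$ and $\cQ \ge t_{2}$, which is \eqref{eq:Q_EUR}.

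For the equality claim I would trace back where the two inequalities are saturated. The weighted average attains $1 - t_{2}$ exactly when every element with $m_k > 0$ satisfies $F^{\rm avg}_{\cS}(\state{\chi_k}) = 1 - t_{2}$, i.e. all elements achieve the same maximal average fidelity --- precisely the condition that the optimal POVM be \emph{constant} with respect to $\cS$. Tightness of Cauchy--Schwarz then forces each such $\ket{\chi_k}$ to be a maximizer of $g$, equivalently a minimizer of the average $T_{2}$ entropy. The step requiring the most care is the converse of the equality statement: a constant optimal POVM only guarantees that $F^{\rm max}_{\cS}$ equals the common element-wise fidelity, so one must still argue this common value is the global maximum $1 - t_{2}$ rather than a lower plateau. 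Establishing that identification --- rather than the Cauchy--Schwarz bound itself, which is routine --- is where I expect the genuine difficulty to lie.
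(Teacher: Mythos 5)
Your proposal is correct and takes essentially the same route as the paper's own proof: both pass to the Rayleigh-quotient form of $\lambda_{\rm max}\left(\mathbb{A}(\state{\chi_k})\right)$, apply Cauchy--Schwarz over the composite index $(i,j)$, and use $\sum_k m_k = d$ (trace of the completeness relation) to bound $F^{\rm max}_{\cS}$ by a weighted average of element-wise fidelities, each at most $1-t_{2}$. If anything, your deployment of Cauchy--Schwarz is slightly cleaner than the paper's (which asserts that the maximizing $\ket{\gamma}$ has the same overlaps as $\ket{\chi_k}$), and the converse of the equality criterion that you flag as the genuinely delicate step --- that a \emph{constant} optimal POVM forces the common fidelity to equal the global maximum $1-t_{2}$ rather than a lower plateau --- is precisely the part the paper also leaves unargued, stating it with only a ``clearly''.
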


\begin{proof}
As noted in Eq.~\eqref{eq:avgFid} above, the maximum fidelity for the ensemble $\cS$ is given by,
\begin{eqnarray*}
&& F^{\rm max}_{\cS} \nonumber \\
&=& \frac{1}{Nd} \sup_{\{m_k, \ket{\chi_k}\}} \sum_{k=1}^{d^2} m_k \lambda_\mx\left[\sum_{i=1, j=1}^{N,d}|\braket{\chi_k}{a^i_j}|^2\state{a^i_j}\right] \\
&=& \frac{1}{Nd} \sup_{\{m_k, \ket{\chi_k}\}} \sum_{k=1}^{d^2} m_k \max_{\ket{\gamma} = 1}\;\sum_{i=1, j=1}^{N,d}|\braket{\chi_k}{a^i_j}|^2|\braket{\gamma}{a^i_j}|^2,
\end{eqnarray*}
where the supremum is taken over all rank-one POVMs $\{m_{k}, \ket{\chi_{k}}\}$, satisfying $\sum_k m_k \state{\chi_k} = \Id$. To evaluate the maximization over $\ket{\gamma}$, notice that the sum over $i,j$ is of the form
\[ \sum_s q_s^2 r_s^2 \quad \text{with} \quad \sum_s q_s^2 = \sum_s r_s^2 = N. \]
For fixed $q_s^2 = |\braket{\phi_k}{a^i_j}|^2$, the quantity is maximized when $r_s^2$ is proportional to $q_s^2$, by Cauchy-Schwarz inequality. Now since the sums of $q_s^2$ and $r_s^2$ are equal, this means that the maximizing $\ket{\gamma}$ has $|\braket{\gamma}{a^i_j}|^2 = |\braket{\phi_k}{a^i_j}|^2$. Therefore,
\begin{align}
F^{\rm max}_{\cS} & = \frac{1}{Nd} \sup_{\{m_k, \state{\chi_k}\}} \sum_{k=1}^{d^2} m_k \sum_{i=1, j=1}^{N,d}|\braket{\chi_k}{a^i_j}|^4 \nonumber \\
                 & \le \frac{1}{N} \max_{\ket{\phi}} \sum_{i=1}^{N}\sum_{j=1}^{d}|\braket{\phi}{a^i_j}|^4 .  \label{eq:Facc_ub}
\end{align}

Thus we have,
\begin{eqnarray}
\cQ(A^{(1)},\ldots,A^{(N)}) &\geq& 1 - \frac{1}{N}\max_{\ket{\phi}} \sum_{i=1}^{N}\sum_{ j=1}^{d}|\braket{\phi}{\psi^i_j}|^4 \nonumber \\
&= & \min_{\ket{\phi}}\frac{1}{N}\sum_{i=1}^{N}\left(1  - \sum_{j=1}^{d}|\braket{\phi}{\psi^i_j}|^4 \right) \nonumber \\
&=& \min_{\ket{\phi}} \frac{1}{N}\sum_{i=1}^{N}T_{2}(A^{(i)};|\phi\rangle),
\end{eqnarray}
as desired. Clearly, equality holds iff the inequality in Eq.~\eqref{eq:Facc_ub} is saturated. This holds iff the optimal POVM achieving the maximum fidelity $F^{\rm max}_{\cS}$ is constant, that is, every element of the optimal POVM achieves the same average fidelity for the ensemble $\cS$. 
\end{proof}

Furthermore, from the definition of the $H_{2}$ entropy and the upper bound in Eq.~\eqref{eq:Facc_ub}, it follows that,
\[ \cQ(A^{(1)}, A^{(2)}, \ldots, A^{(N)}) \geq 1 - \max_{\ket{\phi}} \frac{1}{N} \sum_{i=1}^N 2^{-H_{2}(A^{(i)}, \ket{\phi})}.\]
Applying the inequality of arithmetic and geometric means, we have,
\[ \frac{1}{N} \sum_{i=1}^{N} 2^{-H_{2}(A^{(i)}, \ket{\phi})} \geq 2^{-\frac{1}{N}\sum_{i=1}^{N}H_{2}(A^{(i)}, \ket{\phi})},\]
where equality holds iff the individual entropies corresponding to a given state $|\phi\rangle$ are all equal. Thus, we have the following interesting relation between the incompatibility $\cQ(A^{(1)}, A^{(2)}, \ldots, A^{(N)})$ and the minimum average $H_{2}$ entropy of the set of observables $\{A^{(1)},\ldots, A^{(N)}\}$.
\begin{corollary}
 Consider $N$ non-degenerate observables $\{A^{(1)}, A^{(2)}, \ldots, A^{(N)}\}$ satisfying a tight $H_{2}$ EUR of the form
\[ \min_{|\phi\rangle}\frac{1}{N}\sum_{i=1}^{N}H_{2}(A^{(i)};|\phi\rangle) = c_{2}(A^{(1)}, A^{(2)},\dots,A^{(N)}),\]
such that, the observables $A^{(i)}$ all have equal entropy on the minimizing state $|\phi\rangle$. The mutual incompatibility of such a set of $N$ non-degenerate observables $\{A^{(1)}, A^{(2)}, \ldots, A^{(N)}\}$ satisfies,
\begin{equation}
\cQ(A^{(1)}, A^{(2)}, \ldots, A^{(N)}) \geq 1 - 2^{-c_{2}(A^{(1)}, A^{(2)},\ldots, A^{(i)})}. \label{eq:Q_H2EUR}
\end{equation}
\end{corollary}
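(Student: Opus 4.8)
The plan is to derive Eq.~\eqref{eq:Q_H2EUR} directly from the bound established in Theorem~\ref{thm:Q_EUR}, by converting the linear ($T_{2}$) entropy into the collision ($H_{2}$) entropy. First I would use the identity $T_{2}(A^{(i)};\ket{\phi}) = 1 - \sum_{j}|\braket{a^{(i)}_{j}}{\phi}|^{4} = 1 - 2^{-H_{2}(A^{(i)};\ket{\phi})}$ to rewrite the lower bound of Theorem~\ref{thm:Q_EUR} (equivalently, the upper bound on $F^{\rm max}_{\cS}$ in Eq.~\eqref{eq:Facc_ub}) as
\begin{equation}
\cQ(A^{(1)}, \ldots, A^{(N)}) \geq 1 - \max_{\ket{\phi}}\frac{1}{N}\sum_{i=1}^{N}2^{-H_{2}(A^{(i)};\ket{\phi})}. \nonumber
\end{equation}
The whole task then reduces to showing that the maximized average on the right is at most $2^{-c_{2}}$, so that subtracting it leaves the right-hand side at least $1 - 2^{-c_{2}}$.

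The key tool is the inequality of arithmetic and geometric means, applied for each fixed $\ket{\phi}$ to the nonnegative quantities $2^{-H_{2}(A^{(i)};\ket{\phi})}$. This gives
\[ \frac{1}{N}\sum_{i=1}^{N}2^{-H_{2}(A^{(i)};\ket{\phi})} \geq 2^{-\frac{1}{N}\sum_{i=1}^{N}H_{2}(A^{(i)};\ket{\phi})}, \]
with equality precisely when the $N$ entropies $H_{2}(A^{(i)};\ket{\phi})$ all coincide. Evaluating this at the state $\ket{\phi^{*}}$ that attains the tight $H_{2}$ EUR, the hypothesis that the observables have equal $H_{2}$ entropy on $\ket{\phi^{*}}$ makes the inequality an equality, so that $\frac{1}{N}\sum_{i}2^{-H_{2}(A^{(i)};\ket{\phi^{*}})} = 2^{-c_{2}}$. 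Since $\ket{\phi^{*}}$ simultaneously minimizes the average $H_{2}$ entropy, this identifies $2^{-c_{2}}$ as the natural candidate value for the maximum, and substituting it into the displayed inequality yields $\cQ(A^{(1)},\ldots,A^{(N)}) \geq 1 - 2^{-c_{2}}$.

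The step I expect to require the most care is controlling the outer maximization over $\ket{\phi}$. Because $t \mapsto 2^{-t}$ is convex, the arithmetic--geometric mean inequality bounds the averaged exponential from below, not above, and so cannot by itself cap $\max_{\ket{\phi}}\frac{1}{N}\sum_{i}2^{-H_{2}(A^{(i)};\ket{\phi})}$. The equal-entropy hypothesis is exactly the structural input that closes this gap: it forces the maximum of the averaged purity $\frac{1}{N}\sum_{i}2^{-H_{2}(A^{(i)};\ket{\phi})}$ to equal its value on the EUR-optimal state $\ket{\phi^{*}}$, where the arithmetic and geometric means coincide. I would therefore devote the main effort to verifying that, for the symmetric families satisfying the hypothesis---such as mutually unbiased observables, where the existence of an equal-entropy minimizing state follows from symmetry---the maximizer of the averaged purity indeed matches $2^{-c_{2}}$ in value; away from such structured configurations the two optimizations can part company, which is why the corollary is stated under the tightness-plus-equal-entropy assumption rather than for arbitrary observables.
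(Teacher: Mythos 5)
Your proof follows essentially the same route as the paper's: rewrite the Theorem~\ref{thm:Q_EUR} bound as $\cQ \ge 1 - \max_{\ket{\phi}}\frac{1}{N}\sum_{i}2^{-H_{2}(A^{(i)};\ket{\phi})}$, apply the arithmetic--geometric mean inequality (with equality iff the individual entropies coincide), and use the tightness-plus-equal-entropy hypothesis to identify the maximized average purity with $2^{-c_{2}}$. You are in fact more explicit than the paper about the one genuine subtlety---that AM-GM bounds the averaged purity from \emph{below}, so the argument only closes if the equal-entropy condition holds at the purity-maximizing state (equivalently, if the $T_{2}$- and $H_{2}$-optimal states give the same value)---a point the paper's own two-line derivation glosses over.
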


One important class of observables for which the inequalities derived above are saturated, are the mutually unbiased bases. Recall that two orthonormal bases $\{|a_{i}\rangle\}$, $\{|b_{i}\rangle\}$ in $\cH_{d}$ are said to be mutually unbiased iff $|\langle a_{i}|b_{j}\rangle|^{2} = \frac{1}{d}$, for all $i,j$. In~\cite{bandyoPM_PRA13}, it was shown that the incompatibility of a set of mutually unbiased bases (MUBs) $\{\cB^{(1)}, \cB^{(2)}, \ldots, \cB^{(N)} \} \in \cH_{d}$ is given by
\begin{equation}
 \cQ(\cB^{(1)}, \cB^{(2)}, \ldots, \cB^{(N)})  = \left(1 - \frac{1}{N}\right)\left(1 - \frac{1}{d}\right) .  \label{eq:Q_MUB}
 \end{equation}
This coincides with the lower bound on the average Tsallis($T_2$) entropy of a set of $N$ MUBs in $d$-dimensions~\cite{wu:MUB_bound, PM_MDSdisturbance}
 \[ \inf_{\ket{\phi}} \frac{1}{N} \sum_{i=1}^N T_{2}(\cB^{(i)}; \ket{\phi}) \geq \left(1 - \frac{1}{N}\right)\left(1 - \frac{1}{d}\right) , \]
thus showing that Eq.~\eqref{eq:Q_EUR} is indeed tight for the case of mutually unbiased bases. The optimal POVM that achieves the accessible fidelity in this case is one of the bases $\cB^{(i)}$ in the set, and the mutual unbiasedness property naturally ensures that it is {\it symmetric} for the corresponding ensemble of eigenstates. Similarly, by comparing with the well-known $H_{2}$ EUR for a set of $N$ MUBs in a $d$-dimensional space~\cite{ww:urSurvey},
\begin{align*}
 \inf_{\ket{\phi}} \frac{1}{N} \sum_{i=1}^{N} H_{2}(\cB^{(i)}; \ket{\phi}) & \geq -\log\frac{N + d -1}{Nd} \nonumber \\
 &  \equiv c_{2}(\cB^{(1)},\ldots, \cB^{(N)}) \nonumber \\
 &  = - \log\left[1 -\cQ(\cB^{(1)}, \cB^{(2)}, \ldots, \cB^{(N)})\right], \nonumber\\
\end{align*}
we see that Eq.~\eqref{eq:Q_H2EUR} is also saturated in this case.

\subsection{Entropic lower bounds on $\cQ_{F}$}\label{sec:QF_EUR}

We now show that a similar relation holds between the fidelity-based incompatibility measure $\cQ_{F}$ and the Tsallis ($T_{2}$) EURs formulated for a successive measurement scenario. Let $t_{2}(A\rightarrow B)$ and $t_{2}(B\rightarrow A)$ denote the entropic lower bounds corresponding to observables $A$ and $B$, as defined below:
\begin{eqnarray}
\frac{1}{2}\inf_{\ket{\psi}} \left[ T_{2}(A; \ket{\psi}) + T_{2} (B; \cE^{A}(\ket{\psi}))\right] &=& t_{2} (A\rightarrow B), 
\nonumber  \\
\frac{1}{2}\inf_{\ket{\psi}} \left[ T_{2}(B; \ket{\psi}) + T_{2} (A; \cE^{B}(\ket{\psi}))\right] &=& t_{2} (B\rightarrow A). \nonumber
\end{eqnarray}
Recall from Lemma~\ref{lem:succEUR} that,
\begin{eqnarray}
t_{2}(A\rightarrow B) &=& \inf_{k}\left\{ 1-\sum_{j}\left|\left\langle a_{k}\right|\left.b_{j}\right\rangle \right|^{4}\right\} \\ \nonumber
t_{2} (B\rightarrow A) &=& \inf_{\ket{b_{j}}}\left\{ 1-\sum_{k}\left|\left\langle a_{k}\right|\left.b_{j}\right\rangle \right|^{4}\right\}. \label{eq:uncertB_A}
\end{eqnarray}
In the following theorem we prove that the average entropic lower bound 
\[ t_{2}^{\rm succ}(A,B) = 1/2(t_{2}(A\rightarrow B) + t_{2}(B\rightarrow B))\] constitutes a lower bound for the incompatibility measure $\cQ_{F}(A,B)$.
\begin{theorem}\label{thm:QF_EUR}
For a pair of observables $A, B$, the measure $\cQ_{F}(A,B)$ satisfies,
\begin{equation}
\cQ_{F}(A,B) \geq  t_{2}^{\rm succ}(A,B). \label{eq:QF_EUR}
\end{equation}
\end{theorem}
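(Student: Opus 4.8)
The plan is to reduce the symmetric statement to two one-directional inequalities and then average. Since $\cQ_{F}(A,B) = \frac{1}{2}\left[\cQ_{F}(A\rightarrow B) + \cQ_{F}(B\rightarrow A)\right]$ and likewise $t_{2}^{\rm succ}(A,B) = \frac{1}{2}\left[t_{2}(A\rightarrow B)+t_{2}(B\rightarrow A)\right]$, it suffices to prove the \emph{crossed} pair of bounds $\cQ_{F}(A\rightarrow B) \geq t_{2}(B\rightarrow A)$ and $\cQ_{F}(B\rightarrow A)\geq t_{2}(A\rightarrow B)$; adding these and halving immediately yields Eq.~\eqref{eq:QF_EUR}.

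To establish $\cQ_{F}(A\rightarrow B)\geq t_{2}(B\rightarrow A)$, I would lower bound the supremum over states defining $\cQ_{F}(A\rightarrow B)$ by evaluating the objective on a single, well-chosen test state, namely an eigenstate $\rho = \state{b_{j_{0}}}$ of $B$. With this choice ${\rm Pr}^{B}_{\rho}$ is the point distribution $\delta_{j,j_{0}}$, whereas after the intervening $A$-measurement ${\rm Pr}^{A\rightarrow B}_{\rho}(j) = \sum_{i}|\braket{a_{i}}{b_{j_{0}}}|^{2}\,|\braket{b_{j}}{a_{i}}|^{2}$. Because the reference distribution is concentrated at $j_{0}$, the Bhattacharyya overlap collapses to the single surviving term, $F({\rm Pr}^{A\rightarrow B}_{\rho},{\rm Pr}^{B}_{\rho}) = \sqrt{{\rm Pr}^{A\rightarrow B}_{\rho}(j_{0})}$, and a short calculation gives $1 - F^{2} = 1 - \sum_{i}|\braket{a_{i}}{b_{j_{0}}}|^{4}$.

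Since this holds for every eigenstate $\ket{b_{j_{0}}}$ of $B$, I would choose $j_{0}$ to be the index attaining the infimum in the closed form of Lemma~\ref{lem:succEUR}, which gives exactly $\cQ_{F}(A\rightarrow B) \geq \inf_{j_{0}}\{1 - \sum_{i}|\braket{a_{i}}{b_{j_{0}}}|^{4}\} = t_{2}(B\rightarrow A)$. The symmetric computation, evaluating $\cQ_{F}(B\rightarrow A)$ on eigenstates $\state{a_{k_{0}}}$ of $A$, yields $\cQ_{F}(B\rightarrow A)\geq t_{2}(A\rightarrow B)$ in the same way.

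The feature to be careful about --- and the only real subtlety --- is the crossing: the natural test state for the $A\rightarrow B$ channel is an eigenstate of $B$, since an eigenstate of $A$ is left invariant by the map $\cE^{A}$ and contributes nothing, so the $A\rightarrow B$ incompatibility gets bounded by the $B\rightarrow A$ entropic quantity rather than the $A\rightarrow B$ one. This crossing is harmless because both one-directional entropic bounds reappear after the symmetric average, so the final inequality is clean; it only means the per-direction inequalities cannot be matched up termwise. A secondary point worth verifying is that restricting the supremum to eigenstate test inputs is legitimate --- it only ever produces a lower bound, which is exactly what is needed --- and that the non-degeneracy assumption guarantees $P^{B}_{j_{0}} = \state{b_{j_{0}}}$, so that the point-distribution simplification is exact.
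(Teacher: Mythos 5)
Your proof is correct and follows essentially the same route as the paper's own: evaluate the supremum defining $\cQ_{F}(A\rightarrow B)$ on eigenstates of $B$, collapse the Bhattacharyya overlap to a single term to get the crossed bound against $t_{2}(B\rightarrow A)$, and then symmetrize. The only discrepancy is one of normalization conventions, traceable to the paper itself rather than to you: the paper's Eq.~\eqref{eq:sym_measureN} with $N=2$ gives the prefactor $\tfrac{1}{4}$ (not your $\tfrac{1}{2}$) for $\cQ_{F}(A,B)$, but its proof correspondingly carries $t_{2}$ with an extra factor of $2$ relative to the closed form stated in Lemma~\ref{lem:succEUR}, so both self-consistent readings produce the identical final inequality.
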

\begin{proof}
The result follows from the observation that the EUR in the successive measurement scenario for a measurement of $ A$ followed by $B$ is related to the fidelity of the statistics of a measurement of $ B$ followed by $A$ and vice versa. From the definition of $\cQ_{F}(A\rightarrow B)$ we have,  
\begin{align*}
& Q_{F}(A\rightarrow B) \\
& =\sup_{\ket \psi}\left\{ 1-\left[\sum_{j}\sqrt{\left(\left|\left\langle \psi\right.\left|b_{j}\right\rangle \right|^{2}\right)\sum_{i}\left|\left\langle a_{i}\right.\left|b_{j}\right\rangle \right|^{2}\left|\left\langle \psi\right.\left|a_{i}\right\rangle \right|^{2}}\right]^{2}\right\} \\
 & \geq\sup_{k}\left\{ 1-\sum_{i}\left|\left\langle a_{i}\right.\left|b_{k}\right\rangle \right|^{4}\right\},
 \end{align*}
where $\ket{b_{k}}$ is an eigenket of $B$.  Then Eq.~\eqref{eq:uncertB_A} implies,
\begin{align*}
\cQ_{F}(A\rightarrow B) & \geq\inf_{k}\left\{ 1-\sum_{i}\left|\left\langle a_{i}\right.\left|b_{k}\right\rangle \right|^{4}\right\} \\
 & =2t_{2}(B\rightarrow A) .
\end{align*}
Similarly, we can show $Q_{F}(B\rightarrow A)\geq 2t_{2}(A\rightarrow B)$. Therefore, in terms of $Q_{F}(A,B)$, we have,
\begin{eqnarray}
Q_{F}(A,B) &=& \frac{1}{4}\left[Q_{F}(A\rightarrow B)+Q_{F}(B\rightarrow A)\right] \nonumber \\
&\geq & \frac{1}{2}\left[t_{2}(A\rightarrow B)+ t_{2}(B\rightarrow A)\right] \nonumber \\
&=& t_{2}^{\rm succ}(A,B).
\end{eqnarray}
\end{proof} 


Equality is attained iff 
\[ \underset{\left|b_{k}\right\rangle }{\sup}\sum_{i}\left|\left\langle a_{i}\right.\left|b_{k}\right\rangle \right|^{4}=\underset{\left|b_{k}\right\rangle }{\inf}\sum_{i}\left|\left\langle a_{i}\right.\left|b_{k}\right\rangle \right|^{4} .\]
In other words, equality holds iff the quantity $\sum_{i}\left|\left\langle a_{i}\right.\left|b_{k}\right\rangle \right|^{4}$ is a constant for all eigenkets $|b_{k}\rangle$ of observable $B$. 

It is easy to see that the condition for equality is satisfied for the case of mutually unbiased observables. Indeed, for a pair of mutually unbiased bases $\cB^{(1)}, \cB^{(2)}$ in a $d$-dimensional space, it was shown that~\cite{PM_MDS14},
\[ \cQ_{F}(\cB^{(1)}, \cB^{(2)}) = \frac{1}{2}\left(1 - \frac{1}{d}\right).\]
The $T_{2}$ entropic lower bound may be evaluated as follows: for a measurement of $\cB^{(1)}$ followed by $\cB^{(2)}$  on the same system,  
\begin{eqnarray}
t_{2}(\cB^{(1)}\rightarrow \cB^{(2)}) &=& \inf_{\ket{\psi}}\frac{1}{2}\left[ T_{2}(\cB^{(1)}; |\psi\rangle)+T_{2}\left(\cB^{(2)};\cE^{\cB^{(1)}}(|\psi\rangle\langle\psi|)\right)\right]  \nonumber \\
&=& \inf_{\ket{\psi}}\frac{1}{2}\left(1-\sum_{j=1}^{d}\left|\left\langle \psi\right|\left.b_{j}\right\rangle \right|^{4}\right) \nonumber \\
 &=& \frac{1}{2}\left(1-\frac{1}{d}\right). \nonumber
\end{eqnarray}
Since the lower bound is independent of the order in which the MUBs are measured, $t_{2}(\cB^{(1)}\rightarrow \cB^{(2)}) = t_{2}(\cB^{(2)}\rightarrow \cB^{(1)})$. Therefore,
\[ t_{2}(\cB^{(1)}, \cB^{(2)}) = \frac{1}{2}\left(1-\frac{1}{d}\right),\]
showing that Eq.~\eqref{eq:QF_EUR} is indeed tight for a pair of MUBs.





\subsection{Incompatibilty of qubit observables}\label{sec:Q_qubit}

Here we obtain an exact expression for the incompatibility of a pair of qubit observables, and show that the bounds obtained above are saturated in this case. While evaluating $\cQ(A,B)$ and $\cQ_{F}(A,B)$ involves solving a hard optimization problem in general, the problem can be simplified for the qubit case by making use of the Bloch sphere representation. Thus, we parametrize $A$ and $B$ in terms of unit vectors $\vec{a}, \vec{b} \in \mathbb{R}^{3}$ as follows: $A \equiv \left\{\frac{1}{2}\left(\Id \pm \vec{a}.\vec{\sigma}\right)\right\}$ and $B \equiv \left\{\frac{1}{2}\left(\Id \pm \vec{b}.\vec{\sigma}\right)\right\}$, where, $\vec{\sigma} = (\sigma_{X}, \sigma_{Y}, \sigma_{Z})$ denote the Pauli matrices and $\Id$ denotes the $2\times 2$ identity matrix. 

We merely state the result here and refer to Appendix~\ref{sec:Q_qubitProof} for the proof.
\begin{theorem}\label{thm:Q_qubit}
Given a pair of qubit observables $A, B$ parameterized by real vectors $\vec{a}, \vec{b} \in \mathbb{R}^{3}$ respectively, their mutual incompatibilities are given by
\begin{eqnarray}
\cQ(A,B) &=& \frac{1}{4}\left(1 - |\vec{a}.\vec{b}| \right),  \\
Q_{F}(A, B)&=& \frac{1}{4}\left(1-(\vec{a}.\vec{b})^{2}\right) .
\end{eqnarray}
\end{theorem}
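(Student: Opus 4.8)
The plan is to exploit the Bloch-sphere parametrization to reduce both optimizations to spectral questions for the single real symmetric matrix $M \equiv \vec{a}\,\vec{a}^{T} + \vec{b}\,\vec{b}^{T}$, whose restriction to the plane spanned by $\vec a,\vec b$ has largest eigenvalue $\lambda_{\max}(M) = 1 + |\vec a\cdot\vec b|$ (eigenvector the bisector of $\pm\vec a,\pm\vec b$). Writing a generic qubit state through its Bloch vector and using $|\langle\psi_{\vec m}|\psi_{\vec n}\rangle|^{2} = \tfrac12(1 + \vec m\cdot\vec n)$, every overlap occurring in the two measures becomes an elementary function of $\vec a\cdot\vec b$. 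I would then treat $\cQ$ via the accessible-fidelity reformulation of Theorem~\ref{thm:Q_EUR}, and $\cQ_{F}$ via the direct distance computation of Sec.~\ref{sec:incompat_dist}.

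For $\cQ$, I would first evaluate the map $\mathbb{A}$ on a rank-one POVM element $\state{\chi}$ with Bloch vector $\vec s$. A short computation gives $\mathbb{A}(\state{\chi}) = \tfrac12\bigl(\Id + \vec w\cdot\vec\sigma\bigr)$ with $\vec w = \tfrac12 M\vec s$, so that the average fidelity is $\lambda_{\max}(\mathbb{A}(\state{\chi})) = \tfrac12\bigl(1 + \tfrac12|M\vec s|\bigr)$. Substituting into Eq.~\eqref{eq:accFid} and using the POVM constraints $\sum_k m_k = 2$ and $\sum_k m_k\vec s_k = 0$, the problem collapses to maximizing $\sum_k m_k|M\vec s_k| \le \lambda_{\max}(M)\sum_k m_k$, which is saturated by the projective measurement along the top eigenvector of $M$. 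This gives $F^{\max}_{\cS} = \tfrac34 + \tfrac14|\vec a\cdot\vec b|$ and hence $\cQ(A,B) = \tfrac14\bigl(1 - |\vec a\cdot\vec b|\bigr)$. As a cross-check, this equals the Tsallis bound $t_{2}(A,B)$ of Theorem~\ref{thm:Q_EUR}, and the optimal projective measurement is manifestly constant with respect to $\cS$, so the equality condition stated there is met.

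For $\cQ_{F}$, I would compute the two distributions of Eq.~\eqref{eq:prob_defn} using the fact that the $A$-dephasing map sends a state of Bloch vector $\vec r$ to one with Bloch vector $(\vec r\cdot\vec a)\,\vec a$. Setting $u = \vec r\cdot\vec b$ and $v = (\vec r\cdot\vec a)(\vec a\cdot\vec b)$, the Bhattacharyya fidelity of the two binary distributions simplifies to $F^{2} = \tfrac12\bigl[1 + uv + \sqrt{(1-u^2)(1-v^2)}\bigr]$; writing $u=\cos\mu$, $v=\cos\nu$ yields the transparent form $1 - F^{2} = \sin^{2}\!\bigl(\tfrac{\mu-\nu}{2}\bigr)$. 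Maximizing this angular separation over admissible $\vec r$ places the supremum at $\vec r = \vec b$, an eigenstate of $B$, giving $\cQ_{F}(A\rightarrow B) = \tfrac12\bigl(1 - (\vec a\cdot\vec b)^{2}\bigr)$. By the $\vec a\leftrightarrow\vec b$ symmetry $\cQ_{F}(B\rightarrow A)$ is identical, and averaging through Eq.~\eqref{eq:sym_measureN} produces $\cQ_{F}(A,B) = \tfrac14\bigl(1 - (\vec a\cdot\vec b)^{2}\bigr)$; this simultaneously shows the successive-measurement bound of Theorem~\ref{thm:QF_EUR} is tight here.

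The main obstacle is this last optimization: one must confirm that $\vec r=\vec b$ is the \emph{global} maximizer of $1-F^{2}$ over the whole Bloch ball, not merely a stationary point. I would first argue that it suffices to optimize over pure states in the plane of $\vec a$ and $\vec b$, since any out-of-plane or mixing component rescales both $u$ and $v$ toward the origin and thereby shrinks $|\mu-\nu|$; this reduces everything to a single angular variable $\phi$, and the resulting one-parameter maximization of $\sin^{2}\!\bigl(\tfrac{\mu(\phi)-\nu(\phi)}{2}\bigr)$ is then settled by elementary calculus. A secondary subtlety, for $\cQ$, is justifying the restriction to rank-one POVM elements and that the bound $|M\vec s_k|\le\lambda_{\max}(M)$ is achievable simultaneously for every element; both hold because the two-outcome projective measurement along the top eigenvector of $M$ already saturates each inequality used.
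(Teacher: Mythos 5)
Your proposal is correct and lands on both formulas; its overall skeleton (Bloch-sphere parametrization, accessible fidelity over rank-one POVMs for $\cQ$, direct state optimization for $\cQ_F$) is the same as the paper's Appendix B, but the execution of the $\cQ$ half is genuinely cleaner. The paper parametrizes each POVM direction by angles $(\theta_i,\alpha_i)$, computes $\lambda_{\max}[C_i]=\tfrac12+\tfrac14|\cos\theta_i|\sqrt{g(\alpha_i,\delta)}$, argues separately that $\theta_i=0$, then explicitly extremizes the trigonometric function $g(\alpha,\delta)$, sorts its stationary points into maxima and minima according to the sign of $\cos\delta$, and finally exhibits the antipodal pair $\alpha_{1,2}=\delta/2,\ \pi+\delta/2$ compatible with the constraint $\sum_i m_i\vec c_i=\vec 0$. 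Your spectral argument, $\mathbb{A}(\state{\chi})=\tfrac12\bigl(\Id+\tfrac12(M\vec s)\cdot\vec\sigma\bigr)$ together with $|M\vec s|\le\lambda_{\max}(M)=1+|\vec a\cdot\vec b|$, collapses all of that: it needs no in-plane reduction (the norm bound holds for every unit $\vec s$), no case analysis on $\cos\delta$, and your achievability check --- two antipodal elements along the top eigenvector, which automatically satisfy $\sum_k m_k\vec s_k=\vec 0$ --- is exactly the paper's optimal two-outcome von Neumann measurement along the bisector. The two computations agree because $(\vec s\cdot\vec a)\vec a+(\vec s\cdot\vec b)\vec b=M\vec s$, i.e.\ your $\mathbb{A}(\state{\chi})$ is the paper's $C_i$. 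For $\cQ_F$ your route is essentially the paper's: your $u,v$ are its $\cos\theta\cos(\alpha-\delta)$ and $\cos\theta\cos\alpha\cos\delta$, and $1-F^2=\sin^2\bigl(\tfrac{\mu-\nu}{2}\bigr)$ is a tidy repackaging of its $f_\delta$; your plane/purity reduction via common rescaling of $(u,v)$ replaces the paper's $\partial_\theta$ analysis and is valid (one can check that $\tfrac12\bigl[1-t^2uv-\sqrt{(1-t^2u^2)(1-t^2v^2)}\bigr]$ is nondecreasing in $t\in[0,1]$).

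Two cautions. First, the restriction to rank-one POVMs is not justified by the observation that a projective measurement saturates your inequalities --- saturation only gives achievability, while the upper bound must hold over \emph{all} POVMs; as in the paper, this step rests on the standard refinement result (Davies, Fuchs--Sasaki) that the supremum is attained on POVMs with at most $d^2$ rank-one elements, so cite it rather than fold it into the saturation remark. Second, your closing "settled by elementary calculus" for the one-variable maximization in $\cQ_F$ hides the one real subtlety the paper wrestles with: the objective is not differentiable at the maximizer $\vec r=\vec b$ (the paper notes $df_\delta(0,\alpha)/d\alpha$ is discontinuous at $\alpha=\delta$; in your variables, $\mu=0$ sits at a kink), so a naive stationary-point search fails and one must argue one-sided monotonicity on either side of $\alpha=\delta$, which is precisely the paper's resolution.
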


We see that the measures $\cQ(A,B)$ and $\cQ_{F}(A,B)$ coincide for a pair of MUBs ($\vec{a}.\vec{b} = 0$) and for the case when $A$ and $B$ commute ($\vec{a}.\vec{b} =1$). For any other pair of qubit observables ($0 < \vec{a}.\vec{b} < 1$) we have $\cQ_{F}(A,B) > \cQ(A,B)$; the mutual incompatibility measure formulated in the successive measurement scenario is higher (see Fig.~\ref{fig:Q,QF}).

We further note that $\cQ(A,B)$ coincides with the recently obtained lower bound on the average $T_{2}$ entropy of a pair of qubit observables~\cite{PM_MDSdisturbance}:
\begin{eqnarray}
&& \min_{\ket{\phi}} \frac{1}{2}\left[T_{2}(A;|\phi\rangle) + T_{2}(B;|\phi\rangle)\right] \nonumber \\
&=& \frac{1}{4}(1-|\vec{a}.\vec{b}|) =  \cQ(A,B). \nonumber
\end{eqnarray}
The corresponding $H_{2}$ entropies satisfy~\cite{BPP12_H2entropy},
\begin{eqnarray}
&& \min_{\ket{\phi}} \frac{1}{2}\left[H_2(A; \ket{\phi}) + H_2(B; \ket{\phi})\right] \nonumber \\
&=& - \log\left(\frac{3}{4} + \frac{1}{4}|\vec{a}\cdot\vec{b}|\right) = -\log[1- \cQ(A,B)], \nonumber
\end{eqnarray}
thus showing that the bounds obtained in Theorem~\ref{thm:Q_EUR} are tight for the case of qubit observables.

In the successive measurement scenario, the $T_{2}$ EUR for qubit observables $A, B$ can be evaluated as follows:
\begin{eqnarray}
&& \inf_{\ket{\psi}}\frac{1}{2}\left[ T_{2}\left(A;\cE^{A}(\left|\psi\right\rangle \left\langle \psi\right|)\right)+T_{2}\left(B;\cE^{A}(\left|\psi\right\rangle \left\langle \psi\right|)\right)\right] \nonumber \\
&=& \inf_{i}{\frac{1}{2}}T_{2}\left({\rm Pr}_{\left|a_{i}\right\rangle }^{A\rightarrow B}(j)\right) \nonumber \\
&=& \frac{1}{4}\left(1-\left(\vec{a}\cdot\vec{b}\right)^{2}\right) = \frac{1}{2}Q_{F}(A\rightarrow B), \nonumber
\end{eqnarray}
thus showing that the bound in Theorem~\ref{thm:QF_EUR} is also tight for qubit observables.

\begin{figure}
\includegraphics[scale=0.64]{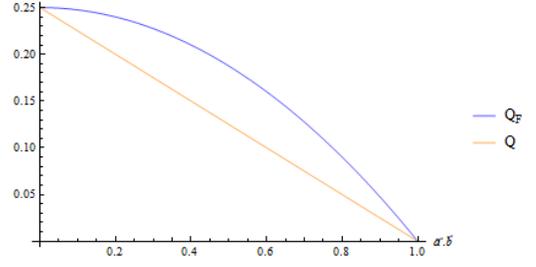}
\caption{Comparing $\cQ$ and $\cQ_{F}$ for qubit observables with Bloch vectors $\vec{a}, \vec{b}$, as a function of the angle $\vec{a}.\vec{b}$.}\label{fig:Q,QF}
\end{figure}

\subsection{Observables that commute on a subspace}\label{sec:commute}

Finally, we discuss a class of observables for which the measures $\cQ$ and $\cQ_{F}$ are strictly better than the EUR formalism in quantifying incompatibility, namely, observables that commute on a subspace. An example of such a set of observables comes from the theory of angular momentum, where the operators $L_{x}$ and $L_{z}$ do not commute but still have the $l=0$ state as a common eigenvector~\cite{sakurai}. The commuting subspace is one-dimensional in this case.

We consider a pair of non-degenerate observables $A,B$ that commute over a subspace of dimension $d_{c}$. We further assume that they are mutually unbiased in the $(d-d_{c})$-dimensional subspace on which they do not commute. The eigenstates $\{\ket{a_i}\}_{1=1}^{d}$ and $\{\ket{b_i}\}_{1=1}^{d}$ of $A$ and $B$ therefore satisfy,
\begin{equation}
|\braket{a_i}{b_j}| = \begin{cases} \delta_{i,j} \text{ for } i \text{ or } j \in \{1, \dots, d_c\} \\ \frac{1}{\sqrt{d- d_c}} \text{ for } i \text{ and } j \in \{d_c+1, \dots, d\} \end{cases}\label{eq:obs_comm}
\end{equation}
As observed above, the lower bound $c(A,B)$ (defined in Eq.~\eqref{eq:eur}) on the average entropies for such a pair of observables is indeed zero, with the minimizing state being any of the common eigenstates $\ket{a_i} \equiv \ket{b_i}, \; i \in [d_c]$. That is,
\[ \inf_{|\phi\rangle} \frac{1}{2}\left( H(A;|\phi\rangle) + H(B;|\phi\rangle) \right) = 0, \]
for any entropic quantity $H$. The measures $\cQ(A,B), \cQ_{F}(A,B)$, on the other hand, gives a non-trivial value for the incompatibility of such a pair of observables.
\begin{theorem}
Consider a pair of observables $A,B$ in $\cH_{d}$ which commute on a subspace of dimension $d_{c} < d$ and are mutually unbiased in the $(d-d_{c})$ subspace. The mutual incompatibility of such a pair is given by
\begin{eqnarray}
\cQ (A,B) &=& \frac{1}{2}\left(1 - \frac{d_{c} + 1}{d} \right) , \label{eq:Q_dc} \\
\cQ_{F}(A,B) &=& \frac{1}{2}\left(1 - \frac{1}{d-d_{c}} \right) . \label{eq:QF_dc}
\end{eqnarray}
\end{theorem}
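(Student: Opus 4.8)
The plan is to exploit the orthogonal decomposition $\cH_{d}=\cH_{c}\oplus\cH_{m}$ forced by Eq.~\eqref{eq:obs_comm}, where $\cH_{c}$ is the $d_{c}$-dimensional common subspace spanned by the shared eigenvectors $\ket{a_{i}}=\ket{b_{i}}$, $i\in[d_{c}]$, and $\cH_{m}$ is the $(d-d_{c})$-dimensional subspace on which $\{\ket{a_{i}}\}_{i>d_{c}}$ and $\{\ket{b_{j}}\}_{j>d_{c}}$ form a pair of MUBs. Every eigenvector of $A$ and of $B$ lies entirely in one of the two blocks, so both observables act block-diagonally. Writing $d'=d-d_{c}$, the strategy is to reduce each measure to a vanishing contribution from $\cH_{c}$ (where the observables commute) plus a contribution from $\cH_{m}$, which is exactly the two-MUB value in dimension $d'$ already recorded in Eq.~\eqref{eq:Q_MUB} and in the MUB evaluation of $\cQ_{F}$. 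The only real work is to justify this decoupling and to track normalizations.

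For $\cQ$ I would start from the accessible-fidelity form in Eq.~\eqref{eq:accFid}. First I would show the supremum over POVMs may be restricted to \emph{block-diagonal} POVMs: the pinching $\Phi(M)=\Pi_{c}M\Pi_{c}+\Pi_{m}M\Pi_{m}$ (with $\Pi_{c},\Pi_{m}$ the projectors onto $\cH_{c},\cH_{m}$) sends POVMs to POVMs and leaves $\mathbb{A}(M)$ invariant, since $\mathbb{A}(M)$ depends only on the diagonal entries $\matel{a_{i}}{M}{a_{i}}$, $\matel{b_{j}}{M}{b_{j}}$ and each eigenvector lies in a single block. Next I would split every block-diagonal element $N_{k}=\Pi_{c}N_{k}\Pi_{c}+\Pi_{m}N_{k}\Pi_{m}$ into its two single-block parts: because $\mathbb{A}$ maps each block into itself, $\lambda_{\mx}(\mathbb{A}(N_{k}))=\max\big(\lambda_{\mx}(\mathbb{A}(\Pi_{c}N_{k}\Pi_{c})),\lambda_{\mx}(\mathbb{A}(\Pi_{m}N_{k}\Pi_{m}))\big)$, and $\max(x,y)\le x+y$ for $x,y\ge0$ shows that splitting can only increase the objective while still giving a valid POVM. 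The optimization then decouples. On $\cH_{c}$ the eigenstates are orthonormal, hence perfectly distinguishable, so that block saturates its fidelity; crucially, each common eigenstate enters $\cS$ \emph{twice} (once as an eigenstate of $A$, once of $B$), doubling its weight and giving the contribution $d_{c}/d$ to $F^{\rm max}_{\cS}$. On $\cH_{m}$ the problem is verbatim the two-MUB accessible-fidelity problem in dimension $d'$, contributing $(d'+1)/(2d)$ by Eq.~\eqref{eq:Q_MUB}. Adding the pieces gives $F^{\rm max}_{\cS}=(d_{c}+d+1)/(2d)$, and Eq.~\eqref{eq:Q} yields Eq.~\eqref{eq:Q_dc}.

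For $\cQ_{F}$ I would evaluate $\cQ_{F}(A\rightarrow B)=\sup_{\rho}[1-F^{2}({\rm Pr}^{A\rightarrow B}_{\rho},{\rm Pr}^{B}_{\rho})]$ directly. First I reduce to pure states: both distributions in Eq.~\eqref{eq:prob_defn} are affine in $\rho$ and the classical fidelity is jointly concave, so $\rho\mapsto F$ is concave and is minimized, hence $1-F^{2}$ maximized, at a pure state. Writing such a state as $\cos\theta\,\ket{\psi_{c}}+\sin\theta\,\ket{\psi_{m}}$ with $\ket{\psi_{c}}\in\cH_{c}$, $\ket{\psi_{m}}\in\cH_{m}$, I would use Eq.~\eqref{eq:obs_comm} to compute the two distributions: on outcomes $j\le d_{c}$ they coincide, while on $j>d_{c}$ one is flat, $q^{A\rightarrow B}(j)=\sin^{2}\theta/d'$, and the other is $p^{B}(j)=\sin^{2}\theta\,|\braket{b_{j}}{\psi_{m}}|^{2}$. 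This gives the closed form $F=\cos^{2}\theta+\tfrac{\sin^{2}\theta}{\sqrt{d'}}\sum_{j>d_{c}}|\braket{b_{j}}{\psi_{m}}|$. Minimizing over $\theta$ and $\ket{\psi_{m}}$, the inequality $\sum_{j}|c_{j}|\ge(\sum_{j}|c_{j}|^{2})^{1/2}$ forces $\theta=\pi/2$ and $\ket{\psi_{m}}$ to be a single eigenvector of $B$ in $\cH_{m}$, so $F_{\min}=1/\sqrt{d'}$ and $\cQ_{F}(A\rightarrow B)=1-1/d'$. By the $A\leftrightarrow B$ symmetry the same holds for $\cQ_{F}(B\rightarrow A)$, and the averaging in Eq.~\eqref{eq:sym_measureN} with $N=2$ gives Eq.~\eqref{eq:QF_dc}.

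The main obstacle is the $\cQ$ computation rather than $\cQ_{F}$, which is essentially mechanical once pure states are justified. The delicate points for $\cQ$ are (i) rigorously reducing the POVM optimization to single-block measurements so that the accessible-fidelity problem genuinely decouples across $\cH_{c}$ and $\cH_{m}$, and (ii) correctly bookkeeping the doubled weight of the common eigenstates in $\cS$, which is exactly what turns the naive $d_{c}/(2d)$ into $d_{c}/d$ and hence produces $(d_{c}+1)/d$ rather than a halved value inside Eq.~\eqref{eq:Q_dc}.
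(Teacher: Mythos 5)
Your proof is correct, and it diverges from the paper's in instructive ways. For $\cQ$, the paper uses the same block decomposition $\cS=\cS_{1}\oplus\cS_{2}$ but justifies the decoupling through a general additivity lemma (Lemma~\ref{lem:FAdditivity}, proved in Appendix~\ref{sec:additivity_proof}): the maximum fidelity of any direct-sum ensemble is the weighted average of the maximum fidelities of its parts, shown by splitting the vectors of an optimal rank-one POVM into their block components and renormalizing their weights. Your pinching-plus-splitting argument reaches the same conclusion more economically, but it leans on a feature special to this ensemble --- every state of $\cS$ lies entirely in one block, so $\mathbb{A}(M)$, which depends only on the diagonal entries $\matel{a_{i}}{M}{a_{i}}$ and $\matel{b_{j}}{M}{b_{j}}$, is pinching-invariant and block-preserving --- whereas the paper's lemma is stated once and is reusable for arbitrary ensembles supported on orthogonal subspaces. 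Both routes then import the two-MUB value from Eq.~\eqref{eq:Q_MUB} and perform the same bookkeeping (your weight $d_{c}/d$ for the doubled common eigenstates is exactly the $N_{1}d_{1}=2d_{c}$ weighting in the lemma), arriving at $F^{\rm max}_{\cS}=(d+d_{c}+1)/(2d)$ and hence Eq.~\eqref{eq:Q_dc}. For $\cQ_{F}$, your contribution actually exceeds the paper's: the paper does not prove Eq.~\eqref{eq:QF_dc} at all, simply citing Ref.~\cite{PM_MDS14}, whereas you give a complete derivation --- joint concavity of the classical fidelity plus affinity of both distributions in $\rho$ to restrict to pure states, the closed form $F=\cos^{2}\theta+\bigl(\sin^{2}\theta/\sqrt{d-d_{c}}\bigr)\sum_{j>d_{c}}|\braket{b_{j}}{\psi_{m}}|$ obtained from Eq.~\eqref{eq:obs_comm}, and the inequality $\sum_{j}|c_{j}|\geq\bigl(\sum_{j}|c_{j}|^{2}\bigr)^{1/2}$ forcing the minimizer to be an eigenstate of $B$ in the unbiased block --- which correctly yields $\cQ_{F}(A\rightarrow B)=1-1/(d-d_{c})$ and, after the averaging prescribed by Eq.~\eqref{eq:sym_measureN} with $N=2$, Eq.~\eqref{eq:QF_dc}.
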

As in the case of qubit observables, here again the incompatibility measure $\cQ_{F}$ is larger than the measure $\cQ$:
\begin{eqnarray}
\cQ_{F}(A,B) &=& \frac{1}{2}\left(1 - \frac{1}{d-d_{c}}\right) \nonumber \\
&=& \frac{1}{2}\left(\frac{d-d_{c}-1}{d-d_{c}}\right) \nonumber \\
&\geq& \frac{1}{2}\left(\frac{d-d_{c}-1}{d}\right) = \cQ(A,B). \nonumber
\end{eqnarray}
The two measures match for $d_{c}=0$, in which case $A$ and $B$ are mutually unbiased, and for $d_{c}=d-1$, in which case $A$ and $B$ commute (see Fig.~\ref{fig:dc}).
\begin{figure}
\includegraphics[scale=0.5]{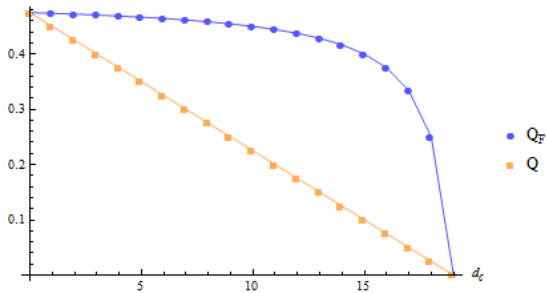}
\caption{Comparing $\cQ$ and $\cQ_{F}$ for observables that commute on a subspace of dimension $d_{c}$, in a space of dimension $d=20$.}\label{fig:dc}
\end{figure} 
\begin{proof}
Eq.~\eqref{eq:QF_dc} for the fidelity-based incompatibility measure was originally evaluated in~\cite{PM_MDS14}. Here, we prove Eq.~\eqref{eq:Q_dc} for the measure $\cQ(A,B)$ as follows. 

Consider the ensemble of eigenstates of the observables $A,B$ defined in Eq.~\eqref{eq:obs_comm}, namely, $\cS\equiv \{|a_{i}\rangle\langle a_{i}|, |b_{i}\rangle\langle b_{i}|\}_{i=1}^{d}$. $\cS$ can be written as a direct sum of two ensembles: $\cS \equiv \cS_{1}\oplus\cS_{2}$, where, $\cS_{1}$ comprises the common eigenstates in the commuting subspace, and $\cS_{2}$ comprises the unbiased states in the non-commuting subspace. We now make use of the following useful property of the maximum fidelity function, namely, that $F^{\rm max}_{\cS}$ is additive for ensembles belonging to distinct subspaces. We merely state the result here and present a proof in Appendix~\ref{sec:additivity_proof}.
\begin{lemma} \label{lem:FAdditivity}
Given ensembles $\cS_1 \in \cH_{d_1}$ and $\cS_2 \in \cH_{d_2}$, consisting of $N_1d_1$ and $N_2d_2$ states respectively, the maximum fidelity of the ensemble $\cS_1 \oplus \cS_2 \in \cH_{d_1}\oplus\cH_{d_2}$ is given by
\[ F^{\rm max}_{\cS_1 \oplus \cS_2} = \frac{1}{N_1 d_1 + N_2d_2}\left(N_1d_1 F^{\rm max}_{\cS_1} + N_2d_2 F^{\rm max}_{\cS_2}\right) . \]
\end{lemma}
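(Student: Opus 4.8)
The plan is to reduce the maximum-fidelity optimization for the combined ensemble to the two independent optimizations over $\cS_1$ and $\cS_2$, by exploiting the fact that every state in $\cS_1 \oplus \cS_2$ is supported entirely within one of the two orthogonal subspaces $\cH_{d_1}$ or $\cH_{d_2}$. Writing the combined ensemble as $M = N_1 d_1 + N_2 d_2$ states drawn with uniform probability $1/M$, I would start from the reconstruction-optimized form of the maximum fidelity,
\[ F^{\rm max}_{\cS_1\oplus\cS_2} = \frac{1}{M}\sup_{\{m_k,\ket{\chi_k}\}}\sum_k m_k\,\lambda_\mx\!\left(\sum_{s}|\braket{\chi_k}{\psi_s}|^2\state{\psi_s}\right), \]
restricting without loss of generality to rank-one POVMs, since the maximum is always attained on such POVMs.

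First I would record the block structure. Let $P_1, P_2$ be the projectors onto $\cH_{d_1}, \cH_{d_2}$ and decompose each POVM vector as $\ket{\chi_k} = \alpha_k\ket{\mu_k} + \beta_k\ket{\nu_k}$ with $\ket{\mu_k}\in\cH_{d_1}$ and $\ket{\nu_k}\in\cH_{d_2}$ normalized and $|\alpha_k|^2+|\beta_k|^2=1$. Because each $\state{\psi_s}$ is supported in a single subspace, the operator inside $\lambda_\mx$ is block diagonal, and $\braket{\chi_k}{\psi_s}=\bar\alpha_k\braket{\mu_k}{\psi_s}$ for $s\in\cS_1$ and analogously for $\cS_2$. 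Hence it equals $|\alpha_k|^2 G_1(\ket{\mu_k}) \oplus |\beta_k|^2 G_2(\ket{\nu_k})$, where $G_i(\ket{\mu})\equiv\sum_{s\in\cS_i}|\braket{\mu}{\psi_s}|^2\state{\psi_s}$ is the operator for the ensemble $\cS_i$, and $\lambda_\mx$ of a block-diagonal operator is the larger of the two block maxima.

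For the upper bound I would use $\max\{x,y\}\le x+y$ for nonnegative $x,y$ to split each term, then note that applying $P_1(\cdot)P_1$ to the completeness relation $\sum_k m_k\state{\chi_k}=\Id$ yields $\sum_k m_k|\alpha_k|^2\state{\mu_k}=P_1$, so that $\{m_k|\alpha_k|^2,\ket{\mu_k}\}$ is a genuine rank-one POVM on $\cH_{d_1}$ (and likewise on $\cH_{d_2}$). Each sub-sum is then bounded by $N_i d_i F^{\rm max}_{\cS_i}$ directly from the definition of the maximum fidelity, giving $F^{\rm max}_{\cS_1\oplus\cS_2}\le \frac{1}{M}(N_1 d_1 F^{\rm max}_{\cS_1}+N_2 d_2 F^{\rm max}_{\cS_2})$. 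For the matching lower bound I would exhibit an explicit POVM: embed the optimal POVMs $\{p_k,\ket{\mu_k}\}$ of $\cS_1$ and $\{q_l,\ket{\nu_l}\}$ of $\cS_2$ side by side on the direct sum, which remains complete. Orthogonality of the subspaces kills all cross overlaps, so a vector $\ket{\mu_k}\in\cH_{d_1}$ sees only $G_1(\ket{\mu_k})$ padded by a zero block, whose $\lambda_\mx$ is exactly $\lambda_\mx(G_1(\ket{\mu_k}))$; summing reproduces $N_1 d_1 F^{\rm max}_{\cS_1}+N_2 d_2 F^{\rm max}_{\cS_2}$ and matches the upper bound.

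The main obstacle I anticipate is the upper-bound step. The passage from $\max\{\cdot,\cdot\}$ to a sum is what makes the two subspaces decouple, and recovering tightness requires verifying carefully that the two induced marginal objects $\{m_k|\alpha_k|^2,\ket{\mu_k}\}$ and $\{m_k|\beta_k|^2,\ket{\nu_k}\}$ are bona fide normalized POVMs on their respective subspaces, so that each can legitimately be compared against $F^{\rm max}_{\cS_i}$. One must also handle degenerate elements with $\alpha_k=0$ or $\beta_k=0$, for which $\ket{\mu_k}$ or $\ket{\nu_k}$ is undefined but which simply contribute to a single subspace. The lower bound is comparatively routine once orthogonality is invoked.
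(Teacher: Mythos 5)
Your proof is correct and follows essentially the same route as the paper's: the lower bound by placing the two optimal sub-POVMs side by side on the direct sum (orthogonality making the embedding complete and killing cross terms), and the upper bound by projecting the optimal POVM onto the two subspaces, bounding $\lambda_\mx$ of the block-diagonal operator via $\max\{x,y\}\le x+y$, and checking that the projected families are themselves valid rank-one POVMs to compare against $F^{\rm max}_{\cS_1}$ and $F^{\rm max}_{\cS_2}$. The only cosmetic difference is that you work directly with the standard normalization $\sum_k m_k\state{\chi_k}=\Id$, which avoids the paper's explicit rescaling of the projected weights.
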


In this case, since $\cS_{1}$ is an ensemble of orthogonal states, $F^{\rm max}_{\cS_{1}} = 1$. For $\cS_{2}$, we can simply invoke the result for MUBs from~\cite{bandyoPM_PRA13}, to get, 
\[ F^{\rm max}_{\cS_{2}} = \frac{d-d_{c} +1}{2(d-d_{c})} .\]
Invoking the additivity result above, we get,
\begin{eqnarray*}
\cQ(A,B) &=& 1 - F^{\rm max}_{\cS} \nonumber \\
&=& 1 - \frac{1}{d}\left(d_c\cdot 1 + (d -d_{c}) \cdot \frac{d - d_{c} + 1}{2(d-d_{c})}\right) \\
&=& \frac{1}{2}\left(1 - \frac{d_{c} + 1}{d} \right). \nonumber
\end{eqnarray*}
 \end{proof}

\section{Evaluating $\cQ, \cQ_{F}$ for a general set of observables}\label{sec:Q_lbound}

Obtaining an exact expression for the mutual incompatibilities $\cQ, \cQ_{F}$ of an arbitrary set of observables involves solving a hard optimization problem in general. It is known that the mutual incompatibility of a set of $N$ MUBs in $\cH_{d}$ constitutes an upper bound for the incompatibility of any set of $N$ observables in $\cH_{d}$~\cite{bandyoPM_PRA13, PM_MDS14}. Here, we obtain non-trivial lower bounds on the mutual incompatibility of a general set of observables, and prove that these lower bounds are in fact efficiently computable.

\subsection{An efficiently computable lower bound for $\cQ$ }\label{sec:matrixNorm}

As defined in Sec.~\ref{sec:F_acc}, the incompatibility $\cQ(A^{(1)}, A^{(2)}, \ldots, A^{(N)})$ of a set of $N$ observables is related to the maximum fidelity $F^{\rm max}_{\cS}$ attainable for the ensemble $\cS$ of their eigenstates. It is known that computing $F^{\rm max}_{\cS}$ for a quantum state estimation process for a general ensemble of states $\cS$ involves a sequence of semidefinite programs (SDPs)~\cite{navascues}. 
An SDP is an efficiently computable convex optimization problem of the general form~\cite{BoydVan}:
\[ \min\langle C, X\rangle \quad \text{subject to} \quad \Phi(X) = A, \quad X \succeq 0 \]
where $C$, $A$ are matrices and $\Phi$ is a linear operator. For a given ensemble of states $\cS$, the algorithm in~\cite{navascues} for computing $F^{\rm max}_{\cS}$ only guarantees asymptotic convergence: while each bound in the sequence may be efficiently computable, the number of steps required for the sequence to converge can be quite large.

Here, we show that by recasting the maximum fidelity function as a matrix norm, we obtain an efficiently computable lower bound for the incompatibility $\cQ$ of any set of observables. We first note that the two-fold optimization in $F^{\rm max}_{\cS}$ can be recast as a single optimization over a class of CPTP maps as follows.
\begin{eqnarray}
 F^{\rm max}_{\cS} & = & \frac{1}{Nd}\sup_{\cM, \cR} \sum_{k,i,j}\langle a_{j}^{(i)}| M_{k}| a^{(i)}_{j}\rangle \langle a_{j}^{(i)}| \sigma_{k}| a^{(i)}_{j}\rangle \nonumber \\
&=& \frac{1}{Nd} \max_{\mathbb{M}}\sum_{i,j} \matel{a^{i}_{j}}{\mathbb{M}(\state{a^i_j})}{a^i_j}, \nonumber
\end{eqnarray}
where the map $\mathbb{M}$ is the entanglement-breaking channel~\cite{HSR03} corresponding to the positive operators $\{M_{k}\}$ and the states $\{\sigma_{k}\}$:
\[ \mathbb{M}(\rho) = \sum_k \Tr(M_k \rho) \sigma_k. \]
It was shown that~\cite{CX13} this optimization over the class of entanglement-breaking channels can be further rewritten as a minimum norm of an operator. Specifically, 
\[
F^{\rm max}_{\cS} = \frac{1}{Nd}\min_{\substack{\rho: \rho \succ 0 \\ \Tr(\rho) = 1}} \left|\left| \sum_{i,j} \state{a^i_j} \otimes \rho^{-1/2}\state{a^i_j}\rho^{-1/2}\right|\right|_\times,
\]
where, $\parallel (.) \parallel_{\times}$ denotes the injective cross norm, defined as \[ ||M||_\times = \sup_{||\alpha|| = ||\beta|| = 1} \bra{\alpha}\bra{\beta} M \ket{\beta}\ket{\alpha}. \]
While the injective norm itself is not efficiently computable in general, it is bounded from above by the standard operator norm $||.||_{\infty}$ given by $||M||_{\infty} = \sup_{\parallel \alpha\parallel = 1} \langle\alpha|M|\alpha\rangle$. Therefore,
\[ F^{\rm max}_{\cS} \leq \frac{1}{Nd}\min_{\substack{\rho: \rho \succ 0 \\ \Tr(\rho) = 1}} \left|\left| \sum_{i,j} \state{a^i_j} \otimes \rho^{-1/2}\state{a^i_j}\rho^{-1/2}\right|\right|_\infty\]
The optimization on the right hand side can be further simplified as~\cite{CX13}
\[
\min_{\substack{\Lambda: \Lambda \succ 0 \\ A \preceq \Id \otimes \Lambda}} \Tr[\Lambda], \quad A = \frac{1}{Nd}\state{a^i_j}\otimes\state{a^i_j}.
\]
This is basically the problem of minimizing a maximum eigenvalue which is easily seen to be an SDP, and efficient algorithms for solving this are known~\cite{BoydVan}.

\subsection{Computability of $Q_{F}$}

Here we prove a non-trivial lower bound on the fidelity-based incompatibility measure $\cQ_{F}(A,B)$ of any pair of observables, which can be recast as a  convex program. Recall from Sec.~\ref{sec:incompat_dist} that the measure $\cQ_{F}$ is defined as a supremum over all states $|\psi\rangle$ of the fidelity between the probability distributions $\{{\rm Pr}_{\ket{\psi}}^{B}(j)\}$ and $\{{\rm Pr}_{\ket{\psi}}^{A\rightarrow B}(j)\}$ defined in Eq.~\eqref{eq:prob_defn}. We first note that the fidelity $F^{2}[P,Q]$ for a pair of probability distributions $P \sim \left\{p_{i}\right\}$ and $Q \sim \left\{q_{i}\right\}$ can be bounded as follows,
\begin{align*}
F^{2}\left[ P,Q\right]&=\left[\sum\limits_{i=1}^{d}\sqrt{p_{i}}\sqrt{q_{i}}\right]^2\\
&= 2\sum\limits_{i=1}^{d}p_{i}q_{i}, \nonumber
\end{align*}
by using the arithmetic mean to bound the geometric mean. Thus the measure $\cQ_{F}$ is bounded by,


\begin{align*}
& Q_{F}(A\rightarrow B) = 1-F^{2}\left[{\rm Pr}_{\ket{\psi}}^{B}, {\rm Pr}_{\ket{\psi}}^{A\rightarrow B} \right]\\
 &= 1-\inf_{\ket{\psi}}\left[\sum\limits_{j=1}^{d}\sqrt{ \left| \braket{\psi}{b_{j}} \right| ^{2}}\sqrt{\sum\limits_{k=1}^{d} \left| \braket{\psi}{a_{k}}\right|^{2}\, \left| \braket{a_{k}}{b_{j}}\right|^{2}}\,\right]^{2}\\
 &\geq 1-2\inf_{\ket{\psi}}\sum\limits_{j=1}^{d} \left| \braket{\psi}{b_{j}} \right| ^{2}\left[\sum\limits_{k=1}^{d} \left| \braket{\psi}{a_{k}}\right|^{2}\, \left| \braket{a_{k}}{b_{j}}\right|^{2}\right] \nonumber \\
 &= 1- g(A,B),\nonumber
\end{align*}
where we have defined
\[g(A,B) := 2\inf_{\ket{\psi}}\sum\limits_{j=1}^{d} \left| \braket{\psi}{b_{j}} \right| ^{2}\left[\sum\limits_{k=1}^{d} \left| \braket{\psi}{a_{k}}\right|^{2}\, \left| \braket{a_{k}}{b_{j}}\right|^{2}\right]. \]
Let $\alpha_{n} = \langle a_{n}|\psi\rangle$ and $\beta_{m} = \langle b_{m} |\psi\rangle$ denote the overlap coefficients of $|\psi\rangle$ with the eigenstates of $A$ and $B$ respectively. The function $g(A,B)$ can then be bounded as follows.
\begin{eqnarray}
g(A,B) &=& \min \sum_{i,j=1}^{d} \left|\alpha_{i}\right|^{2} \, \left|\beta_{j}\right|^{2} \, \left| \braket{a_{i}}{b_{j}}\right|^{2} \nonumber \\
&\leq& \min_{\{\alpha_{i}\}} \sum_{j}^{d} \sum\limits_{k}^{d} \left|\alpha_{k}\right|^{2} \, \left|\braket{b_{j}}{a_{k}}\right|^{2} \sum\limits_{i}^{d} \left|\alpha_{i}\right|^{2} \, \left| \braket{a_{i}}{b_{j}}\right|^{2} \nonumber \\
&=& \min_{\{ \alpha_{i}\}} \sum_{i,k}^{d} \left|\alpha_{i}\right|^{2} \left(\sum_{j} \left|\braket{b_{j}}{a_{k}}\right|^{2} \left| \braket{a_{i}}{b_{j}}\right|^{2}  \right) \left|\alpha_{k}\right|^{2} \nonumber \\
&=& \min_{\substack{\sum_{i}v_{i} = 1 \\ v \succeq 0}} v^{T} A v,
\end{eqnarray}
where $v$ is the matrix with elements $v_{i}=\left|\alpha_{i}\right|^{2}$ and ${\emph A}$ is the constant matrix with elements $A_{ij}=\left| \braket{a_{i}}{b_{j}}\right|^{2}$.  

In other words, the solution to the following optimization problem
\begin{equation*}
\begin{aligned}
& {\text{minimize:}}
& & v^{T} A v \\
& \text{subject to:}
& & \sum_{i} v_{i} = 1\\
&&& v \succeq 0.
\end{aligned}
\end{equation*}
gives a non-trivial lower bound for the incompatibility $\cQ_{F}(A,B)$ for an arbitrary pair of observables. This minimization problem is indeed in the form of a convex program~\cite{BoydVan}, it is therefore efficiently computable with standard convex optimization routines.

\section{Concluding Remarks}

In summary, our work offers a comparative study of the different approaches used to quantify the mutual incompatibility of quantum observables. We consider two recently proposed measures of incompatibility, namely, the measure $\cQ$ that is related to the maximum fidelity function in a state-discrimination context (as also the accessible fidelity in QKD), and the measure $\cQ_{F}$ that arises naturally in a successive measurement scenario. We show that these operational measures are lower bounded by the standard entropic uncertainty bounds formulated in terms of the Tsallis $T_{2}$ entropy. We also obtain conditions under which these incompatibility measures coincide exactly with the $T_{2}$ entropic lower bounds, and show that these conditions are satisfied for the case of MUBs and for a pair of qubit observables. 

We also consider the case of observables that commute on a subspace, which serves to highlight the fact that the measures $\cQ$ and $\cQ_{F}$ go beyond EURs in quantifying incompatibility. We obtain an exact expression for the incompatibilities $\cQ, \cQ_{F}$ of such a pair of observables, whereas the entropic uncertainty lower bound is simply zero. We further note the interesting fact that $\cQ_{F} \geq \cQ$ in this case, as well as for qubit observables. This highlights the fact that the mutual incompatibility of quantum observables manifests itself more strongly in a successive measurement scenario, a fact that is observed even in the corresponding entropic uncertainty bounds. 

While the incompatibility measures studied here are hard to compute in general, we do have exact expressions for the incompatibility of a pair of qubit observables. For a general set of observables, we obtain non-trivial lower bounds on their mutual incompatibility, which can be efficiently computed via convex optimization routines. Our bounds thus provide a useful tool for estimating the mutual incompatibility of an arbitrary set of observables.




\appendix

\section{Role of the measure $\cQ$ in QKD }\label{sec:Q_QKD}

In a generic ``prepare and measure'' QKD protocol Alice transmits a set of pure states drawn from different incompatible bases to Bob, who measures the received state in a basis of his choice. At the end of the protocol Alice and Bob compare their choice of bases, and retain only those states for which Bob's measurement basis coincides with the basis that Alice used. The corresponding outcomes represent the {\it raw key} in Bob's possession.

In the absence of errors, the raw key is already the secret key, and the eavesdropper has no information. However, in any practical protocol, Alice and Bob must account for the errors caused by the eavesdropper's presence and the secret key is obtained after correcting for these errors. Estimating the error rate is thus an important step in arriving at the final length of secret key extracted. 

\begin{lemma}
For a QKD protocol whose signal states are drawn uniformly at random from the eigenstate ensemble $\cS$, the measure $\cQ(A^{(1)},\ldots,A^{(N)})$ is the attainable lower bound on the error rate caused by an eavesdropper adopting an intercept-resend strategy.
\end{lemma}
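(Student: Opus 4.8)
The plan is to show that, when Eve adopts an intercept-resend strategy, the probability that Bob correctly recovers Alice's transmitted symbol on the sifted key is \emph{exactly} the average fidelity $F_{\cS}(\cM,\cR)$ already introduced in Sec.~\ref{sec:F_acc}, so that the disturbance she induces is its complement $1-F_{\cS}(\cM,\cR)$. Minimizing over all of Eve's attacks then reproduces the definition of $\cQ$. First I would fix notation for the attack. Eve intercepts the travelling signal, applies a POVM $\cM=\{M_{k}\}$ (independent of the basis label, which she does not know), obtains outcome $k$, and resends the reconstructed state $\sigma_{k}$ specified by a reconstruction map $\cR:k\mapsto\sigma_{k}$. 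After sifting, only the rounds in which Bob measures in the same basis $A^{(i)}$ that Alice used survive. Conditioned on Alice having sent $\ket{a^{(i)}_{j}}$ in such a round, Eve records outcome $k$ with probability $\matel{a^{(i)}_{j}}{M_{k}}{a^{(i)}_{j}}$, and Bob, measuring $\sigma_{k}$ in basis $A^{(i)}$, recovers the intended symbol $j$ with probability $\matel{a^{(i)}_{j}}{\sigma_{k}}{a^{(i)}_{j}}$.

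Next I would average the success probability over the uniform ensemble. Since, conditioned on a matched round, Alice's pair $(i,j)$ is still uniform with weight $1/(Nd)$, the overall probability that Bob's sifted outcome agrees with Alice's is
\[
P_{\rm corr}=\frac{1}{Nd}\sum_{k,i,j}\matel{a^{(i)}_{j}}{M_{k}}{a^{(i)}_{j}}\,\matel{a^{(i)}_{j}}{\sigma_{k}}{a^{(i)}_{j}}=F_{\cS}(\cM,\cR),
\]
which is precisely the average fidelity function of Eq.~\eqref{eq:maxFid}. Hence the error rate of the attack is $e(\cM,\cR)=1-F_{\cS}(\cM,\cR)$.

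Finally, since $F_{\cS}(\cM,\cR)\le F^{\rm max}_{\cS}$ for every choice of $\cM$ and $\cR$, every intercept-resend attack satisfies $e(\cM,\cR)\ge 1-F^{\rm max}_{\cS}=\cQ(A^{(1)},\ldots,A^{(N)})$. Moreover the supremum defining $F^{\rm max}_{\cS}$ is attained, because the optimization runs over a compact set of POVMs and reconstruction states in the finite-dimensional space $\cH_{d}$, so an optimal eavesdropper saturates the inequality. This establishes $\cQ$ as the \emph{attainable} lower bound on the error rate, as claimed.

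I expect the main obstacle to be conceptual rather than computational: correctly isolating the error event on the sifted key---namely that, after Eve's intervention, Bob's outcome in Alice's basis differs from Alice's intended value---and verifying that the resulting success probability coincides \emph{term by term} with $F_{\cS}(\cM,\cR)$. This requires carefully tracking that only matched-basis rounds contribute, that the conditional prior on $(i,j)$ remains uniform after sifting, and that Eve's POVM is fixed independently of the basis she cannot observe. Once this identification is made, the remainder is an immediate consequence of the definition $\cQ=1-F^{\rm max}_{\cS}$.
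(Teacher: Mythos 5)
Your proof is correct and follows essentially the same route as the paper's: identify Bob's sifted-key success probability under an intercept-resend attack with the average fidelity $F_{\cS}(\cM,\cR)$ of Eq.~\eqref{eq:maxFid}, take complements to get the error rate, and optimize over Eve's strategies to recover $\cQ = 1 - F^{\rm max}_{\cS}$. Your added compactness remark justifying attainability of the supremum is a small refinement the paper leaves implicit, but the argument is otherwise the same.
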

The measure $\cQ$ for a set of observables is thus a benchmark for a QKD protocol whose signal states are drawn from the eigenstates of the given set, assuming the eavesdropper adopts an intercept-resend strategy.
\begin{proof}
Given the eavesdropper's choice of POVM $\cM$ and reconstruction map $\cR$, the final ensemble seen by Bob is $\cS' \equiv \{p_{a}(i,j), \sigma_{a}(i,j)\}$, where the probabilities $p_{a}$ are given by,
\[ p_{a}(i,j) = \tr[M_{a}|\psi_{j}^{(i)}\rangle\langle\psi_{j}^{(i)}|]. \] 
Assuming that Bob's choice of basis coincides with that of Alice, the average success probability for Bob to obtain the original state sent by Alice is given by, \begin{eqnarray} 
&& p_{\rm succ} \nonumber \\
 &=& \frac{1}{Nd}\sum_{a}\sum_{i,j} p_{a}(i,j) \tr[\sigma_{a}\state{\psi^{(i)}_{j}}] \nonumber \\ 
  &=& \frac{1}{Nd} \sum_{a}\sum_{i,j}\tr[M_{a}\state{\psi_{j}^{(i)}}] \tr[\sigma_{a}\state{\psi^{(i)}_{j}}]. \nonumber  
\end{eqnarray} 
The error rate $\mathbf{E}_{\cS} (\cM, \cR) $ is simply the average probability that Bob's measurement gives the wrong result: 
\begin{eqnarray} 
&& \mathbf{E}_{\cS} (\cM, \cR) \nonumber \\ 
&=& 1 - \frac{1}{Nd} \sum_{a}\sum_{i,j}\tr[M_{a}\state{\psi_{j}^{(i)}}] \tr[\sigma_{a}\state{\psi^{(i)}_{j}}] \nonumber \\ 
&=& 1 - F_{\cS}(\cM, \cR) \geq 1 - F^{\rm max}_{\cS} \nonumber \\
&\equiv& \cQ(A^{(1)}, A^{(2)}, \ldots, A^{(N)}). \end{eqnarray} 
The incompatibility $\cQ(A^{(1)},\ldots,A^{(N)})$ is thus the smallest error rate possible. 
\end{proof}

\section{Incompatibility of qubit observables}\label{sec:Q_qubitProof}

In this section, we prove Theorem.~\ref{thm:Q_qubit} by explicitly evaluating the incompatibility measures $\cQ(A,B)$ and $\cQ_{F}(A,B)$ for a pair of qubit observables $A = \alpha_{1}\Id + \alpha_{2}\vec{a}.\vec{\sigma}$ and $B = \beta_{1}\Id + \beta_{2}\vec{b}.\vec{\sigma}$.

\subsection{Evaluating $\cQ(A,B)$}

We first note that the ensemble $\cS$ comprising the eigenstates of a pair of qubit observables  $A, B$, can be denoted in terms of the vectors $\vec{a}, \vec{b}$ as follows:
\[ \cS \equiv \left\{\frac{\Id \pm \vec{a}\cdot\vec{\sigma}}{2}, \frac{\Id \pm \vec{b}\cdot\vec{\sigma}}{2}\right\}. \]  
To evaluate $\cQ$, we use the form of the accessible fidelity function in Eq.~\eqref{eq:accFid}. In order to evaluate the maximum fidelity of an ensemble of states in a $d$-dimensional Hilbert space, it suffices to optimize over POVMs with $d^{2}$ rank-one elements~\cite{davies:access, Fuchs-Sasaki}. Thus, for the case of qubit observables, it suffices to restrict our optimization to POVMs with four rank-one elements, that is,  $\cM = \{m_{i},|\chi_{i}\rangle\langle\chi_{i}|\}_{i=1}^{4}$, subject to $\sum_{i}m_{i}|\chi_{i}\rangle\langle\chi_{i}| = \Id$. We may parameterize the elements of $\cM$ in terms of vectors $c_{i}\in \mathbb{R}^{3}$, so that the optimization is over
\[ \cM \equiv \left\{m_{i}, \frac{\Id + \vec{c}_{i}\cdot\vec{\sigma}}{2}\right\}_{i=1}^4, \; |\vec{c}_{i}| = 1, \; \forall i=1,\ldots, 4,\]
subject to $\sum_{i=1}^{4} m_{i} = 1$ and $\sum_{i=1}^{4} m_{i} \vec{c}_{i} = \vec{0}$.

The maximum fidelity function $F^{\rm max}_{\cS}$ involves the following optimization problem:
\[ F^{\rm max}_\cS = \max_{\{m_{i}, \vec{c}_{i}\}} \sum_{i} m_{i} \lambda_\mx [C_{i}], \]
subject to $\sum_{i=1}^4 m_i = 1$ and $\sum_{i=1}^4 m_i \vec{c}_i = \vec{0}$, where the matrices $C_{i}$ are given by
\[ C_{i} = \frac{1}{2}\left[\Id + \dfrac{(\vec{c}_{i}\cdot \vec{a}) \vec{a} + (\vec{c}_{i}\cdot \vec{b})\vec{b}}{2}\cdot \vec{\sigma}\right]. \]

Let $\theta_{i}$ denote the angle $\vec{c}_{i}$ makes with the plane containing $\vec{a}$ and $\vec{b}$ and let its component on this plane make an angle of $\alpha_{i}$ with $\vec{a}$. That is,
\[ \vec{c}_{i} = \cos \theta_{i}\left(\cos \alpha_{i} \vec{a} + \sin \alpha_{i} \vec{a}_\perp \right) + \sin \theta_{i} \hat{e}, \]
where $\vec{a}_\perp$ is the vector perpendicular to $\vec{a}$ in the plane of $\vec{a}$ and $\vec{b}$ and $\hat{e}$ is the unit vector perpendicular to the plane. So the constraint on $\sum_{i}m_{i}c_{i}$ becomes
\[ \sum_{i=1}^{4} m_{i} \cos \theta_{i} \cos \alpha_{i} = \sum_{i=1}^{4} m_{i} \cos \theta_{i} \sin \alpha_{i} = \sum_{i=1}^4 m_{i} \sin \theta_{i} = 0. \]
The maximum eigenvalues of the matrices $C_{i}$ are given by,
\begin{eqnarray*}
\lambda_{\mx}[C_{i}] &=& \frac{1 + |(\vec{c}_i\cdot \vec{a})\vec{a} + (\vec{c}_i \cdot \vec{b})\vec{b}|}{2} \\
&=& \frac{1}{2} + \frac{1}{2}\frac{|\cos \theta_i|}{2} \sqrt{g(\alpha_{i},\delta)},
\end{eqnarray*}
\[g(\alpha_{i}, \delta) = \cos^2 \alpha_i + \cos^2(\alpha_i - \delta) + 2 \cos\alpha_i \cos(\alpha_i - \delta)\cos\delta .\]
The optimization problem now simplifies to,
\[
F^{\rm acc}_\cS = \frac{1}{2} + \frac{1}{4} \max_{\{m_i, \theta_i, \alpha_i\}}\sum_i m_i |\cos \theta_i|\sqrt{g(\alpha_{i},\delta)}.\]
The optimization over the parameters $\theta_i$ is trivial -- we should simply take each $\theta_i = 0$, which satisfies the constraint on the $\sin \theta_i$. Thus, all the measurement operators of the optimal POVM lie in the plane spanned by $\vec{a}$ and $\vec{b}$. 

The optimization further simplifies to
\[ f _\cS = \max_{\{m_i, \alpha_i\}}\sum_i m_i \sqrt{g(\alpha_{i},\delta)}, \]
subject to the constraints $\sum_i m_i = 1$ and $\sum_i m_i \cos \alpha_i = \sum_i m_i \sin\alpha_i = 0$. Note that even though our objective function is a convex sum, we cannot say that the maximal value will simply be equal to the maximum term, because the constraints on the sines and cosines cannot then both be satisfied. In fact, we have,
\[ f_\cS \le \max_\alpha \sqrt{g(\alpha_{i},\delta)} = \sqrt{\max_\alpha g(\alpha,\delta)}, \]
with equality holding in case there are multiple solutions $\alpha$ which maximize this function and we can take convex combination of them such that the additional constraints hold. To obtain the extremum of the expression under the square root, we require,
\[ \sin 2\alpha + \sin 2(\alpha - \delta) + 2 \cos \delta \sin(2\alpha - \delta) = 0. \]
The above equality can hold for a number of $\alpha$, including $\alpha = \delta/2, \pi/2 + \delta/2, \pi + \delta/2, 3\pi/2 + \delta/2$. When $\cos \delta > 0$, maxima are obtained at $\alpha = \delta/2, \pi + \delta/2$ and minima at $\alpha = \pi/2 + \delta/2, 3\pi/2 + \delta/2$; when $\cos \delta \le 0$, we have the reverse.

The maximum value in either case is given by $(1 + |\cos \delta|)^2$. Now if we take $m_1 = m_2 = 1/2$ and $\alpha_1 = \delta/2$, $\alpha_2 = \pi + \delta/2$, we can actually satisfy,
\[ m_1 \cos \alpha_1 + m_2 \cos \alpha_2 = m_1 \sin \alpha_1 + m_2 \sin \alpha_ 2 = 0.\]
So we can achieve
\begin{eqnarray}
f_\cS &=& 1 + |\cos \delta|. \nonumber \\
\Rightarrow F^{\acc}_\cS &=& \frac{3}{4} + \frac{1}{4}|\cos \delta|, \nonumber
\end{eqnarray}
with the maximum value attained for a two-outcome von Neumann measurement:
\begin{widetext}
\[ \cM = \begin{cases} \left\{\dfrac{1}{2}\left(\Id + \dfrac{\vec{a} + \vec{b}}{2 \cos(\delta/2)}\cdot \vec{\sigma}\right), \dfrac{1}{2}\left(\Id - \dfrac{\vec{a} + \vec{b}}{2 \cos(\delta/2)}\cdot \vec{\sigma}\right) \right\}, & \text{if }\ \vec{a}\cdot \vec{b} = \cos \delta \ge 0 \\
\left\{\dfrac{1}{2}\left(\Id + \dfrac{\vec{a} - \vec{b}}{2 \sin(\delta/2)}\cdot \vec{\sigma}\right), \dfrac{1}{2}\left(\Id - \dfrac{\vec{a} - \vec{b}}{2 \sin(\delta/2)}\cdot \vec{\sigma}\right) \right\}, & \text{if }\ \vec{a}\cdot \vec{b} = \cos \delta < 0.
\end{cases} \]
\end{widetext}

\subsection{Evaluating $\cQ_{F}(A,B)$}
Recall that $\cQ_{F}(A,B)$ is defined as,
\begin{align*}
Q_{F}(A\rightarrow B) & =\underset{\rho}{\text{sup}}\left\{ 1-F^{2}(Pr_{\rho}^{B}(j),Pr_{\rho}^{A\rightarrow B}(j))\right\} \\
 & =\underset{\rho}{\text{sup}}\left\{ 1-\left[\sum_{j}\sqrt{Pr_{\rho}^{B}(j)}\sqrt{Pr_{\rho}^{A\rightarrow B}(j)}\right]^{2}\right\}. 
\end{align*} 

Parametrizing $\rho$ in terms of a real vector $\vec{r} \in \mathbb{R}^{3}$, this simplifies to, 
\[ \cQ_{F}(A\rightarrow B) = \frac{1}{2} - \frac{1}{2}\underset{\vec{r}}{\text{min}}f_{\vec{a}, \vec{b}}(\vec{r}), \]
where, the function $f_{\vec{a}, \vec{b}}(\vec{r})$ is given by
\begin{eqnarray}
f_{\vec{a}, \vec{b}}(\vec{r}) &=& (\vec{r}\cdot\vec{a})(\vec{r}\cdot\vec{b})(\vec{a}\cdot\vec{b}) \nonumber \\
&-& \frac{1}{2}\sqrt{\left(1-\left((\vec{r}\cdot\vec{a})(\vec{a}\cdot\vec{b})\right)^{2}\right)\left(1-\left(\vec{r}\cdot\vec{b}\right)^{2}\right)} . \nonumber
\end{eqnarray}
Let $\theta$ denote the angle made by $\vec r$ with the plane defined by the vectors $\vec{a}, \vec{b}$, and $\alpha$ be
the angle made by the component of $\vec r$ with
$\vec a$ in the plane. Then we can rewrite the expression for $Q_{F}(A\rightarrow B)$ in terms of these angles as,
\[
Q_{F}(A\rightarrow B) =\frac{1}{2}-\frac{1}{2}\underset{\theta,\alpha}{\text{min}}f_{\delta}(\theta, \alpha),
\]
%

\begin{widetext}
\[
 f_{\delta} (\theta, \alpha) = \cos^{2}\theta\cos\alpha\cos\delta\cos\left(\alpha-\delta\right)+\sqrt{\left(1-\cos^{2}\theta\cos^{2}\alpha\cos^{2}\delta\right)\left(1-\cos^{2}\theta\cos^{2}\left(\alpha-\delta\right)\right)},\]
\end{widetext}
where, $\cos\delta= \vec a.\vec{b}$ as before. Taking the partial derivative with respect to $\theta$, we see that $\frac{\partial f}{\partial\theta}= 0$, iff $\theta=0$ or $\frac{\pi}{2}$. When $\theta=\frac{\pi}{2}$, the function $f_{\delta}(\theta, \alpha)$ attains its maximum value of $1$ for any value of $\delta$, thus indicating that $f_{\delta}$ attains its minimum value for $\theta=0$. 
\begin{figure}
\includegraphics[scale=0.25]{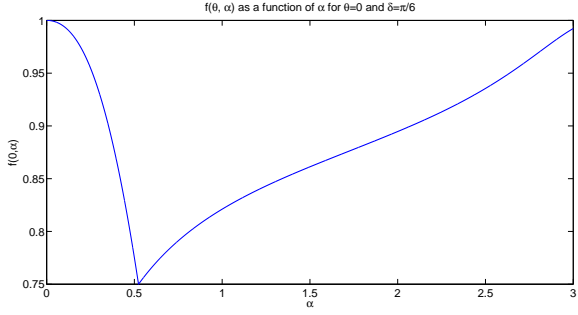}
\caption{$f_{\delta}(0, \alpha)$ for $\delta = \frac{\pi}{6}$.}
\end{figure}
Setting $\theta=0$ and plotting $f_{\delta}(0, \alpha)$ for a given value of $\delta$, we see that its minimum value is attained for $\alpha=\delta$. However, since the derivative $\frac{df_{\delta}(0,\alpha)}{d\alpha}$ is discontinuous at $\alpha = \delta$,  we can formally prove that $\alpha = \delta$ is indeed the global minimum as follows: we first consider an $\epsilon$-neighbourhood around $\alpha=\delta$ and see that it is indeed a local minimum for $f_{\delta}(0, \alpha)$. Then we can show that $f_{\delta}(0, \alpha)$ is monotonically decreasing for $\alpha < \delta$ and monotonically increasing for $\alpha > \delta$, thus proving that $\alpha=\delta$ is indeed the global minimum for $f_{\delta}(0, \alpha)$. 

In other words, the state that minimizes the fidelity function corresponds to $\vec{r}=\vec{b}$ (an eigenstate of $B$), so that
\[ Q_{F}(A\rightarrow B)=\frac{1}{2}-\frac{\left(\vec{a}\cdot\vec{b}\right)^{2}}{2}.\]  
A similar argument shows that 
\[ Q_{F}(B\rightarrow A)=\frac{1}{2}-\frac{\left(\vec{a}\cdot\vec{b}\right)^{2}}{2},\]
where the state that minimizes the fidelity function is an eigenstate of $A$. \\

\section{Additivity of accessible fidelity}\label{sec:additivity_proof}
Here we prove Lemma~\ref{lem:FAdditivity} on the additivity of the maximum fidelity function for ensembles $\cS_{1}\in\cH_{d_{1}}$ and $\cS_{2} \in \cH_{d_{2}}$. 
\begin{proof}
Consider the optimal POVM $\mathbb{M}_{1}$ that achieves the maximum fidelity $F^{\rm max}_{\cS_{1}}$ for ensemble $\cS_{1}$ and the optimal POVM $\mathbb{M}_{2}$ that achieves the  fidelity $F^{\rm max}_{\cS_{2}}$ for ensemble $\cS_{2}$. By combining the elements of $\mathbb{M}_{1}$ and $\mathbb{M}_{2}$ weighted appropriately, we have a POVM $\mathbb{M}$ that acts on the entire space $\cH_{d} \equiv \cH_{d_{1}}\oplus\cH_{d_{2}}$. Clearly, the fidelity achieved by $\mathbb{M}$ constitutes a lower bound for the maximum fidelity $F^{\rm max}_{\cS}$ for the ensemble $\cS=\cS_{1}\oplus\cS_{2}$. Therefore, we have,
\[ F^{\rm max}_{\cS} \geq \frac{1}{N_{1}d_{1} + N_{2}d_{2}}\left(N_{1}d_{1} F^{\rm max}_{\cS_{1}} + N_{2}d_{2}F^{\rm max}_{\cS_{2}}\right). \]

We now prove additivity by showing that the maximum fidelity function $F^{\rm max}_{\cS}$ is also upper bounded by the weighted average of $F^{\rm}_{\cS_{1}}$ and $F^{\rm max}_{\cS_{2}}$. Note that for any vector $\ket{\phi_a} \in \cH_{d_1}\oplus\cH_{d_2}$ in the optimal rank-one POVM $\{\chi_a, \state{\phi_a}\}$ that attains $F^{\rm max}_{\cS_{1}\oplus\cS_{2}}$, can be written as,
\begin{eqnarray}
 \state{\phi_a} &=& P_1\state{\phi_a}P_1 + P_1\state{\phi_a}P_2 \nonumber \\
&& + P_2\state{\phi_a}P_1 + P_2\state{\phi_a}P_2, \nonumber
\end{eqnarray}
where $P_1$ and $P_2$ are the projectors on to the subspaces $\cH_{d_1}$ and $\cH_{d_2}$ respectively. Now, the condition $\sum_a \chi_a \state{\phi_a} = I_{d_1 + d_2}/(d_1 + d_2)$ implies,
\begin{align*}
  \sum_a \chi_a P_1\state{\phi_a}P_1 &= \frac{I_{d_1}}{d_1 + d_2}, \\
   \sum_a \chi_a P_2\state{\phi_a}P_2 &= \frac{I_{d_2}}{d_1 + d_2}.
\end{align*}
Therefore, we may bound $F^{\rm max}_{\cS}$ as,
\begin{widetext}
\begin{eqnarray*}
F^{\rm max}_{\cS} & = & \frac{d_1 + d_2}{N_1d_1 + N_2d_2} \max_{\substack{\left\{\chi_a, \ket{\phi_a}: \right. \\ \left. \sum_a \chi_a \state{\phi_a} =  \Id/(d_1 + d_2) \right\}}} \sum_a \chi_a \lambda_\mx\left[\sum_{\ket{\psi}_1 \in \cH(d_1)}|\braket{\phi_a}{\psi}_1|^2 \state{\psi}_1 + \sum_{\ket{\psi}_2 \in \cH(d_2)}|\braket{\phi_a}{\psi}_2|^2 \state{\psi}_2\right] \\
                             & \leq & \frac{d_1 + d_2}{N_1d_1 + N_2d_2} \max_{\{\chi_a, P_1\ket{\phi_a}\}} \sum_a \chi_a \lambda_\mx\left[\sum_{\ket{\psi}_1 \in \cH(d_1)}|\matel{\phi_a}{P_1}{\psi}_1|^2 \state{\psi}_1\right] \\
&& \quad + \quad \frac{d_1 + d_2}{N_1d_1 + N_2d_2} \max_{\{\chi_a, P_2\ket{\phi_a}\}} \sum_a \chi_a \lambda_\mx\left[\sum_{\ket{\psi}_2 \in \cH(d_2)}|\matel{\phi_a}{P_2}{\psi}_2|^2 \state{\psi}_2\right].
\end{eqnarray*}
\end{widetext}
We define the states $\ket{\phi_{1, a}} = \frac{P_1\ket{\phi_a}}{|\matel{\phi_1}{P_1}{\phi_a}|^2}$, $\ket{\phi_{2, a}} = \frac{P_2\ket{\phi_a}}{|\matel{\phi_a}{P_2}{\phi_a}|^2}$, which satisfy $||\ket{\phi_{1, a}}|| = ||\ket{\phi_{2, a}}|| = 1$, and the scalar coefficients
\[\chi_{i, a} = \frac{d_1 + d_2}{d_i}\matel{\phi_a}{P_i}{\phi_a}|^2\chi_a \]
with $i = 1, 2$, which are weights in $\cH_{d_1}$ and $\cH_{d_2}$ respectively. We can then restate the optimization in terms of the POVMs $\{\chi_{1,a}, \ket{\phi_{1,a}}\}$, $\{\chi_{2,a}, \ket{\phi_{2,a}}\}$ subject to the constraints $\sum_a \chi_{1,a} \state{\phi_{1,a}} =  I_{d_1}/d_1$ and $\sum_a \chi_{2,a} \state{\phi_{2,a}} =  I_{d_2}/d_2$, as follows.
\begin{widetext}
\begin{eqnarray*}
F^{\rm max}_{\cS} & \le & \frac{d_1}{N_1d_1 + N_2d_2}  \max_{\{\chi_{1,a}, \ket{\phi_{1,a}}\}} \sum_a\chi_{1,a}\lambda_\mx\left[\sum_{\ket{\psi}_1 \in \cH(d_1)}|\braket{\phi_{1,a}}{\psi}_1|^2 \state{\psi}_1\right]  \\
                                & + &  \frac{d_2}{N_1d_1 + N_2d_2} \max_{\{\chi_{2,a}, \ket{\phi_{2,a}}\}}\sum_a\chi_{2, a} \lambda_\mx\left[\sum_{\ket{\psi}_2 \in \cH(d_2)}|\braket{\phi_{2,a}}{\psi}_2|^2 \state{\psi}_2\right]  \\
                                & = & \frac{1}{N_1d_1 + N_2d_2}\left(N_1d_1 F^{\rm max}_{\cS_1} + N_2d_2 F^{\rm max}_{\cS_2}\right) .
\end{eqnarray*}
\end{widetext}
\end{proof}

\end{document}